\documentclass[journal]{IEEEtran}

% *** GRAPHICS RELATED PACKAGES ***
%
\ifCLASSINFOpdf

\else

\fi

\usepackage{graphics}

\usepackage{amsmath}
\usepackage{amsthm}

\usepackage{amssymb}  % assumes amsmath package installed
\usepackage{amsfonts} % assumes amsmath package installed
\usepackage{graphicx,float,wrapfig}
\usepackage{cite}
\usepackage{bm}
\usepackage{siunitx}
\usepackage{tikz}
\usepackage[compatibility=false]{caption}
\usepackage{eucal}

\usepackage{array, tabularx, makecell, booktabs}
\bibliographystyle{unsrt}
\usepackage{soul,color}
\usepackage[linesnumbered,ruled]{algorithm2e}
\usepackage{url}
%\usepackage{flushend}
% \allowdisplaybreaks
\graphicspath{{figures/}}
\usepackage{subcaption}
\usepackage{dcolumn}
\usepackage[english]{babel}
\usepackage[utf8]{inputenc}
\usepackage{accents}

\newtheorem{theorem}{Theorem}

\newtheorem{remark}{Remark}

\newcommand{\norm}[1]{\left\lVert#1\right\rVert}
\newcommand\munderbar[1]{%
  \underaccent{\bar}{#1}}
 \makeatother
\captionsetup{
  justification = centering
}

\begin{document}
%
% paper title
% Titles are generally capitalized except for words such as a, an, and, as,
% at, but, by, for, in, nor, of, on, or, the, to and up, which are usually
% not capitalized unless they are the first or last word of the title.
% Linebreaks \\ can be used within to get better formatting as desired.
% Do not put math or special symbols in the title.
\title{Compact Cooperative Adaptive Cruise Control for Energy Saving: Air Drag Modelling and Simulation}
%
%
% author names and IEEE memberships
% note positions of commas and nonbreaking spaces ( ~ ) LaTeX will not break
% a structure at a ~ so this keeps an author's name from being broken across
% two lines.
% use \thanks{} to gain access to the first footnote area
% a separate \thanks must be used for each paragraph as LaTeX2e's \thanks
% was not built to handle multiple paragraphs
%

\author{Yeojun~Kim,~\IEEEmembership{Member,~IEEE,}, 
        Jacopo~Guanetti,~\IEEEmembership{Member,~IEEE,}
        and~Francesco~Borrelli,~\IEEEmembership{Fellow,~IEEE}% <-this % stops a space
% \thanks{Copyright (c) 2015 IEEE. Personal use of this material is permitted. However, permission to use this material for any other purposes must be obtained from the IEEE by sending a request to pubs-permissions@ieee.org.}
\thanks{Y. Kim and F. Borrelli are with the Department
of Mechanical Engineering, University of California at Berkeley, Berkeley, CA 94701, USA.}
\thanks{J. Guanetti is with AVConnect Inc., Berkeley, CA 94704, USA.}
\thanks{Corresponding author: Yeojun Kim {\small yk4938@berkeley.edu}}% <-this % stops a space
}

% The paper headers
\markboth{Journal of \LaTeX\ Class Files,~Vol.~14, No.~8, August~2015}%
{Shell \MakeLowercase{\textit{et al.}}: Bare Demo of IEEEtran.cls for IEEE Journals}

% make the title area
\maketitle

% As a general rule, do not put math, special symbols or citations
% in the abstract or keywords.
\begin{abstract}
This paper studies the value of communicated motion predictions in the longitudinal control of connected automated vehicles (CAVs).
We focus on a safe cooperative adaptive cruise control (CACC) design and analyze the value of vehicle-to-vehicle (V2V) communication in the presence of uncertain front vehicle acceleration.
The interest in CACC is motivated by the potential improvement in energy consumption and road throughput.
In order to quantify this potential, we characterize experimentally the relationship between inter-vehicular gap, vehicle speed, and (reduction of) energy consumption for a compact plug-in hybrid electric vehicle.
The resulting model is leveraged to show efficacy of our control design, which pursues small inter-vehicle gaps between consecutive CAVs and, therefore, improved energy efficiency.
Our proposed control design is based on a robust model predictive control framework to systematically account for the system uncertainties.
We present a set of thorough simulations aimed at quantifying energy efficiency improvement when vehicle states and predictions exchanged via V2V communication are used in the control law.
\end{abstract}

% Note that keywords are not normally used for peerreview papers.
\begin{IEEEkeywords}
Connected Automated Vehicles, Cooperative Adaptive Cruise Control, Robust Model Predictive Control
\end{IEEEkeywords}

% For peer review papers, you can put extra information on the cover
% page as needed:
% \ifCLASSOPTIONpeerreview
% \begin{center} \bfseries EDICS Category: 3-BBND \end{center}
% \fi
%
% For peerreview papers, this IEEEtran command inserts a page break and
% creates the second title. It will be ignored for other modes.
\IEEEpeerreviewmaketitle

\section{Introduction}
% The very first letter is a 2 line initial drop letter followed
% by the rest of the first word in caps.
% 
% form to use if the first word consists of a single letter:
% \IEEEPARstart{A}{demo} file is ....
% 
% form to use if you need the single drop letter followed by
% normal text (unknown if ever used by the IEEE):
% \IEEEPARstart{A}{}demo file is ....
% 
% Some journals put the first two words in caps:
% \IEEEPARstart{T}{his demo} file is ....
% 
% Here we have the typical use of a "T" for an initial drop letter
% and "HIS" in caps to complete the first word.
\IEEEPARstart{V}{ehicle} connectivity enables exchange of information between vehicles, infrastructures, remote data, and computing centers.
Connected automated vehicles (CAVs) are regarded as the principal beneficiaries of such augmented stream of information.
CAVs can explicitly cooperate among themselves by means of vehicle-to-vehicle (V2V) communication. 
In the past decades, several cooperative vehicle applications have been studied and implemented, both by academia and industry, to enhance safety and performance of the transportation sector. 
Examples include the California PATH program in the 1990s, which demonstrated the cooperative driving of a group (or \textit{platoon}) of connected vehicles on a highway to increase traffic throughput \cite{shladover1991automated, shladover2007path}, the Grand Cooperative Driving Challenges in the Netherlands in 2011 and 2016, which demonstrated platoon control and other interactions among vehicles in different traffic situations \cite{kianfar2012design, englund2016grand}.

Cooperative Adaptive Cruise Control (CACC) is an instance of automated longitudinal motion control for CAVs, or, in other words, an enhancement of adaptive cruise control (ACC) enabled by V2V communication.
Previous research on CACC has focused on the \emph{string stability} of vehicle stream to maximize the road throughput \cite{swaroop1996string,dunbar2012distributed, dolk2017event}.
String stability is defined as the uniform boundedness of all the states of the interconnected system at all times, if the initial states of the interconnected system are uniformly bounded \cite{swaroop1996string}.
In platoon control, string stability implies that the leading vehicle's velocity and position perturbation does not cause a larger amplification in the following vehicle's velocity and position \cite{zhou2005range}.

% With more information available than what on-board sensors can offer, CAVs can achieve a string stability

Another major motivation of CACC development is to minimize energy consumption and pollutant emissions of vehicles. 
This can be achieved by different strategies. 
Examples include platoon coordination \cite{wang2017developing}, driving at more energy efficient speed \cite{barth2008real}, minimizing idling time and braking on arterial corridors with traffic lights \cite{yang2016eco}, and optimizing acceleration/deceleration with the gear shift \cite{shao2017robust}.

Other works on CACC which pursue energy saving are focused on the minimization of the inter-vehicular distance, both to maximize the road throughput and to reduce the vehicle aerodynamic drag (and consequently its energy consumption) \cite{tsugawa2011automated,andersson2012online,lammert2014effect}.
The relationship between inter-vehicular distance, aerodynamic drag, and vehicle energy consumption is often cited in the CACC literature; however, its experimental validation is limited.
The effect on energy consumption can be substantial in heavy-duty vehicles, which have large frontal areas.
An experimental characterization of such relationship is found in \cite{hucho2013aerodynamics} for a heavy-duty vehicle.
Passenger vehicles can also benefit from a reduced inter-vehicular distance: in \cite{zabat1995aerodynamic}, one-eighth scale models of minivan vehicles were tested in a wind tunnel, and it was shown that, even for such vehicles, a reduced inter-vehicular distance translates directly to reduced aerodynamic drag and energy consumption.
It is reported that the vehicles driving at very short gaps can reduce their consumption up to $30\%$ on highways and up to $10\%$ on urban roads.
However, to the best of the authors' knowledge, the literature lacks an experimental characterization of the aerodynamic drag and the vehicle energy consumption as a function of the inter-vehicular distance, \textit{on full scale passenger vehicles}.

Several CACC control designs have been proposed in the literature. 
Examples include classical PID control \cite{naus2010string}, robust $H_{\infty}$ control \cite{kayacan2017multiobjective}, and data-based control such as reinforcement learning \cite{kreidieh2018dissipating}.
Model predictive control (MPC) is also appealing for CACC because its formulation can naturally leverage predictive information offered by V2V communication, such as velocity forecast and/or vehicle model parameters. 
Robust MPC is particularly appropriate for this application due to its capability to provide safety guarantee against distance, velocity, and actuator constraints, in the face of uncertain front vehicle future movements.

\subsection{Contributions}
This paper is focused on the value of V2V communication in longitudinal motion control of CAVs.
In particular, we propose a provably safe CACC which exploits a V2V message, which includes the current and predicted states, and show the controller performance in reducing the inter-vehicular distance and the energy consumption.
The contributions of this work are as follows:
\begin{itemize}
\item We characterize experimentally the aerodynamic drag and vehicle energy consumption as a function of vehicle speed and inter-vehicular gap, for a compact passenger vehicle.
\item We present, through simulation study, a provably safe CACC based on model predictive control and investigate how leveraging V2V communication can lead to the energy saving which is quantified by using the experimental performance curves obtained at the previous point.
\end{itemize}

% \subsection{Paper structure}
% The remainder of this paper is organized as follows.
% Section \ref{sec:veh_dyn_model} introduces the vehicle model and presents the air drag characterization.
% Section \ref{sec:cacc_design} details the CACC design and show that our controller can guarantees the safety of the controlled vehicle under some assumptions.
% Section \ref{sec:simulation} illustrates through simulations how the proposed controller can benefit from V2V communication in terms of inter-vehicular distance and energy efficiency. 
% Finally, Section \ref{sec:conclusion} concludes the work.

% \subsection{Structure}

% The remainder of this paper is organized as follows.
% Section~\ref{sec:veh_dyn_model} focuses on vehicle dynamics modeling and air drag characterization by using experimental data.
% Section~\ref{sec:cacc_design} focuses on the CACC design which exploits V2V communication to minimize the inter-vehicular distance, and Section~\ref{sec:simulation} presents simulation results detailing closed-loop  inter-vehicular distance and estimated energy efficiency.
% Section~\ref{sec:conclusion} draws the paper conclusions, and the Appendix explains how our controller can guarantee the safety of the controlled vehicle under some assumptions.

\subsection{Notation}
The notation $x(k| t)$ indicates the predicted value of variable $x$ at time step $k$ using the available information at time $t$.
The mark above the variable distinguishes the type of that variable. For instance, the true value of variable $x$ is denoted by $\bar{x}$, its measurement is denoted by $\hat{x}$, and $\Tilde{x}(k| t)$ is the estimated state at time $k$ using the available information at time $t$.
The positive part of the real number $x$ is denoted by $x^+$.
$\norm{x}_Q$ denotes the 2 norm of a vector $Qx \in \mathbb R^{n_x}$, or $(x'Qx)^{\frac{1}{2}}$.
For $x\in \mathbb{R}$, $(x)^+$ operator is defined as
\begin{equation}
    (x)^+ := \begin{cases}
    x & \textnormal{if } x\geq 0 ,\\
    0 & \textnormal{else. }
    \end{cases}
\end{equation}
\section{Vehicle dynamics modeling}
\label{sec:veh_dyn_model}
We model the dynamics of the longitudinal vehicle speed as \cite{Guzzella2013}:
\begin{equation}
\label{eq:long_dyn}
\begin{aligned}
M \frac{d}{dt}v(t) = F_w(t) - &M g (\sin{\vartheta(t)}  + C_r \cos{\vartheta(t)}) \\
&- C_v v(t) - \frac{1}{2} \rho A C_x v(t)^2 ,
\end{aligned}
\end{equation}
where 
$F_w$ is the drivetrain force at the wheels (both for traction and braking),
$M$ is the vehicle mass,
$\vartheta$ is the road slope,
$A$ is the front area,
$\rho$ is the air density,
$C_r$ is the rolling coefficient,
$C_v$ is the viscous friction coefficient,
$C_x$ is the air drag coefficient.
We also denote the wheel radius as $R_w$, the wheel speed as $\omega_w = v/R_w$, the wheel torque as $T_w = F_w R_w$.

Motivated by the lack of data on the effects of inter-vehicular distance on $C_x$, we designed and executed a set of experiments aimed at filling this knowledge gap.

\subsection{Experimental setup}
\begin{figure}
\centering
\includegraphics[width=\linewidth,height=\textheight,keepaspectratio]{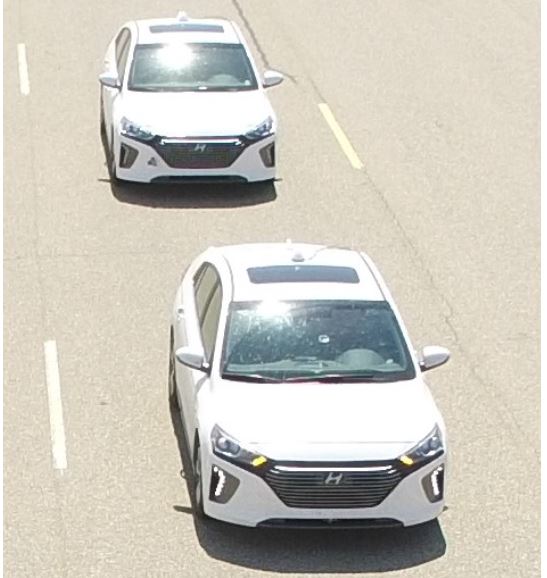}
\caption{Picture of two Hyundai Ioniq PHEVs used as our test platform}
\label{fig:ioniq_pic} 
\end{figure}
Our test platform is a set of two identical 2017 Hyundai Ioniq plug-in hybrid electric vehicles (PHEV) depicted in Figure~\ref{fig:ioniq_pic}.
The weight of the two cars is matched before every experimental session.
The hybrid powertrain has a pre-transmission parallel architecture \cite{Guzzella2013}. Both vehicles are instrumented with a Mando camera, a Delphi ESR radar, a Cohda MK5 On-Board Unit (including a DSRC radio and a GPS unit), a high accuracy fuel flowmeter, and high accuracy current sensors to measure the high voltage battery current, the motor current, the starter/generator current, and the current for auxiliaries.
The longitudinal motion can be controlled manipulating the reference torque of the electric motor, the internal combustion engine, and the hydraulic brakes.
In general, production adaptive cruise control only allows a minimum time gap of about \SI{1}{\s}.
Since our goal is to characterize the effect of short inter-vehicular distance on $C_x$, in our experiments we override the production systems and implement a dedicated longitudinal control, that allows smaller time gap.
Figure~\ref{fig:platoon_scheme} depicts the vehicle setup for our experiment.

\begin{figure}
\centering
\includegraphics[width=\linewidth,height=\textheight,keepaspectratio]{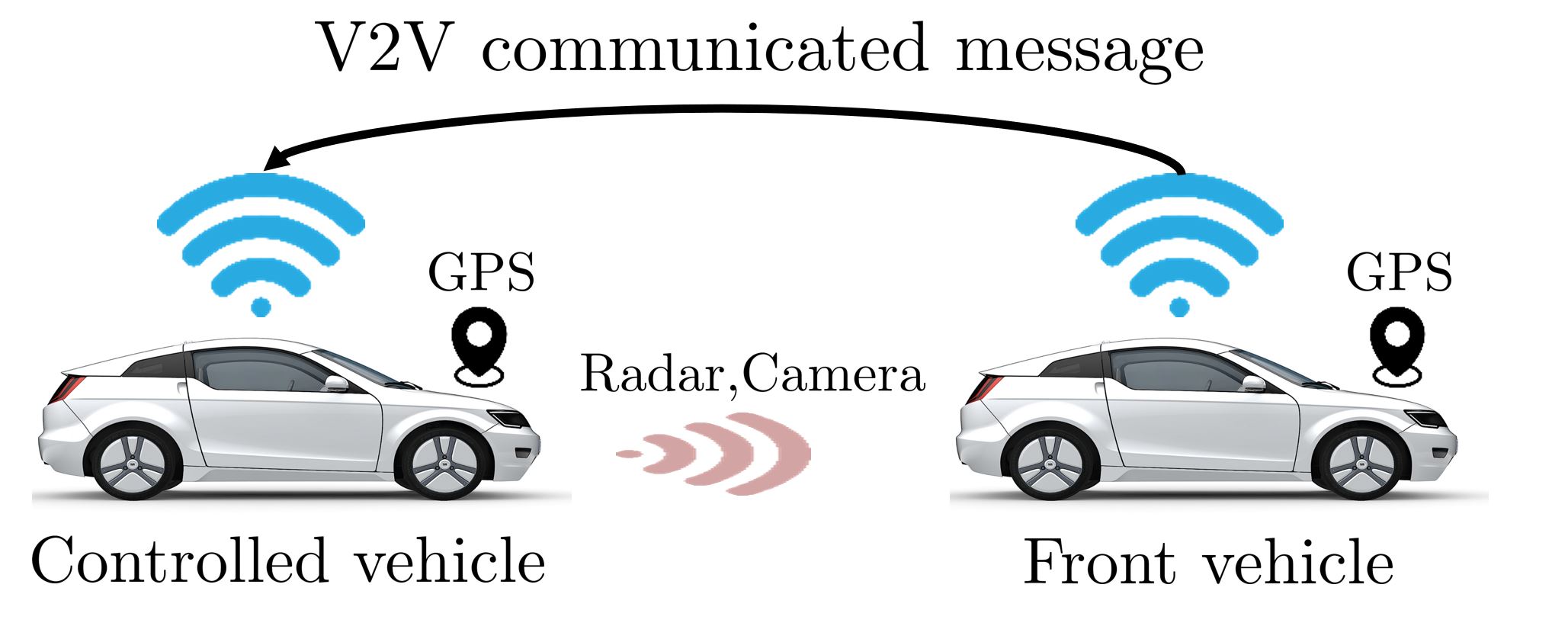}
\caption{Platoon of two vehicles with predecessor following communication topology}
\label{fig:platoon_scheme} 
\end{figure}

Our analysis is focused on how the inter-vehicular distance affects the wheel torque required to drive at the same speed and road grade.
Our test platform does not include a direct measurement of the wheel power, but only an estimate thereof based on motor and engine estimated torques.
On the other hand, accurate measurements of fuel and electric energy consumption are available; hence the energy performance \emph{after} the powertrain can be accurately characterized in terms of total battery and/or fuel consumption, although we are primarily interested in the energy performance \emph{before} the powertrain in terms of total wheel power consumption.
To minimize the effect of powertrain operation on the data collected in our experiments, we override the production powertrain control and operate a (very simple) dedicated powertrain controller.

The dedicated powertrain controller allows only one prime mover (engine or motor) at a time; in other words, the vehicles either operate in purely electric (or full electric, FE) mode or in purely thermal (or full combustion engine, FC) mode.
Figure~\ref{fig:phev_powertrain} depicts the powertrain architecture and these two configurations.
This simple strategy minimizes energy flows between the powertrain components, which cause transient operation and are difficult to measure and model.
Since the scope of this section is the characterization of the longitudinal dynamics and specifically of the aerodynamic drag, we only consider portions of data in which the wheel speed and torque are positive, $T_w > 0, \omega_w > 0$.
We also denote by $r_g$ the gear ratio between the wheels and the powertrain axle (which is the same for the motor and the engine), by $P_a$ the electric power absorbed by auxiliaries such as air conditioning, and by $\mu \in \left\{ \textnormal{FE}, \textnormal{FC}\right\}$ the operating mode of the powertrain.
With the assumptions and limitations just stated, we can model the battery and fuel power as
\begin{equation}
\begin{aligned}
    P_b &= 
    \begin{cases}
        \dfrac{T_w \omega_w}{\eta_m(T_w / r_g , \omega_w r_g)} + P_a , \; &\text{if} \; \mu = FE, \\
        0 , \; &\text{if} \; \mu = FC, \\
    \end{cases} \\
    P_f &= 
    \begin{cases}
        P_a , \; &\text{if} \; \mu = FE, \\
        \dfrac{T_w \omega_w}{\eta_e(T_w / r_g , \omega_w r_g)} , \; &\text{if} \; \mu = FC, \\
    \end{cases}
\end{aligned}
\label{eq:pt_model}
\end{equation}
where $\eta_m$ and $\eta_e$ are the efficiencies of the electric motor and internal combustion engine, respectively; both are assumed to be known static functions of the motor and engine torque and speed.

\begin{figure}
\centering
% EV configuration ============================================================
\begin{subfigure}{\columnwidth}
\centering
\begin{tikzpicture}
\node[draw][minimum width=.6cm,minimum height=.6cm] (rect) at (4.2,9) (W) {W};
\node[draw][minimum width=.6cm,minimum height=.6cm] (rect) at (3.2,9) (T) {T};
\node[draw][minimum width=.6cm,minimum height=.6cm] (rect) at (1.6,9) (M) {M};
\node[draw][minimum width=.6cm,minimum height=.6cm] (rect) at (0,9) (B) {B};
\node[draw,color=gray,dashed][minimum width=.6cm,minimum height=.6cm] (rect) at (2.4,8.2) (E) {E};
\node[draw][minimum width=.6cm,minimum height=.6cm] (rect) at (0.8,8.2) (A) {A};
\node[draw][circle,minimum size=.1cm] at (2.4,9) (S1) {};
\node[draw][circle,minimum size=.1cm] at (0.8,9) (S2) {};
\draw[<->] (T) -- node[midway,above] {} (W); %$P_w$
\draw[<->] (S1) -- node[midway,above] {} (T); %$P_t$
\draw[<->] (M) -- node[midway,above] {} (S1); %$P_m$
\draw[<->] (S2) -- node[midway,above] {} (M); %$P_l$
\draw[<->] (B) -- node[midway,above] {} (S2); %$P_b$
\draw[->] (E) -- node[midway,right] {} (S1); %$P_e$
\draw[->] (S2) -- node[midway,right] {} (A); %$P_a$
\end{tikzpicture}
\caption{Full electric (FE) configuration}
\end{subfigure}
\\[\baselineskip]
% parallel HEV configuration 1 ================================================
\begin{subfigure}{\columnwidth}
\centering
\begin{tikzpicture}
\node[draw][minimum width=.6cm,minimum height=.6cm] (rect) at (4.2,9) (W) {W};
\node[draw][minimum width=.6cm,minimum height=.6cm] (rect) at (3.2,9) (T) {T};
\node[draw,color=gray,dashed][minimum width=.6cm,minimum height=.6cm] (rect) at (1.6,9) (M) {M};
\node[draw][minimum width=.6cm,minimum height=.6cm] (rect) at (0,9) (B) {B};
\node[draw][minimum width=.6cm,minimum height=.6cm] (rect) at (2.4,8.2) (E) {E};
\node[draw][minimum width=.6cm,minimum height=.6cm] (rect) at (0.8,8.2) (A) {A};
\node[draw][circle,minimum size=.1cm] at (2.4,9) (S1) {};
\node[draw][circle,minimum size=.1cm] at (0.8,9) (S2) {};
\draw[<->] (T) -- node[midway,above] {} (W); %$P_w$
\draw[<->] (S1) -- node[midway,above] {} (T); %$P_t$
\draw[<->] (M) -- node[midway,above] {} (S1); %$P_m$
\draw[<->] (S2) -- node[midway,above] {} (M); %$P_l$
\draw[<->] (B) -- node[midway,above] {} (S2); %$P_b$
\draw[->] (E) -- node[midway,right] {} (S1); %$P_e$
\draw[->] (S2) -- node[midway,right] {} (A); %$P_a$
\end{tikzpicture}
\caption{Full combustion engine (FC) configuration}
\end{subfigure}
\caption{Architecture of a pre-transmission hybrid electric powertrain and its two configurations used for our data collection campaign. W: longitudinal dynamics. T: transmission. E: internal combustion engine. M: electric motor. B: high-voltage battery. M is permanently connected to T. E is connected to M and T through a clutch (not shown here); when the clutch is closed, M and E and the input shaft of T are on the same axle; A: auxiliary load.}
\label{fig:phev_powertrain}
\end{figure}
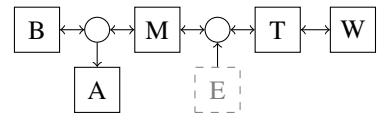
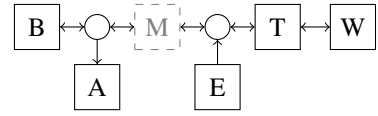

Each experiment is performed at a fixed combination of speed, time gap, and powertrain mode.
Table~\ref{tab:experment-summary} summarizes the amount of data collected (in terms of number of laps around a \SI{10}{\km} long oval test track) for each combination.
Only data from complete laps is utilized.

Laps in full electric mode are constrained by the limited on-board storage of electric energy (\SI{8.9}{\kWh}); in most operating conditions, 3 complete laps can be performed in FE mode before a battery recharge is necessary.
Note that, in order to explore energy saving even at a very small inter-vehicular distance, we conducted experiment with the \SI{0.3}{s} time gap although this small gap is not generally accepted for safety reasons in commercial, human driven vehicles.

\begin{table}[]
\caption{Number of test laps for each powertrain mode (FE/FC), time gap (columns), and velocity (rows).}
\centering
\begin{tabular}{cccccc}
\toprule
& $\infty$  & \SI{2}{s} & \SI{1}{s} & \SI{0.5}{s}   & \SI{0.3}{s}   \\
\midrule
\SI{24.6}{\meter/\second}    & FE: 1     & FE: 0     & FE: 1     & FE: 1         & FE: 1        \\
(\SI{55}mph)& FC: 2     & FC: 2     & FC: 2     & FC: 2         & FC: 2        \\
\midrule
\SI{29.1}{\meter/\second}    & FE: 1     & FE: 0     & FE: 1     & FE: 1         & FE: 1        \\
(\SI{65}mph)& FC: 2     & FC: 2     & FC: 2     & FC: 2         & FC: 2        \\
\midrule
\SI{33.5}{\meter/\second}    & FE: 1     & FE: 0     & FE: 1     & FE: 1         & FE: 1        \\
(\SI{75}mph)& FC: 2     & FC: 2     & FC: 2     & FC: 2         & FC: 2        \\
\bottomrule
\end{tabular}
\label{tab:experment-summary}
\end{table}

The test facility used for these experiments is the Hyundai-Kia California Proving Ground.
The tests discussed here are performed on a single lane in a \SI{10}{\km} long, oval test track.
Note that because we limit the scope of this project to the longitudinal motion of CAVs excluding their lateral motion, our tests were performed on a single lane in the oval test track, of which the curvature is always less than \SI{500}{\per\m}. The lateral motion of CAVs was controlled by the human drivers who tried to maintain the vehicles at the center of the lane throughout the experiments.
The facility includes an apparatus to measure the wind speed and direction; all the data presented next are collected in conditions of low wind speed (average and maximum wind speed always less than \SI{3.18}{\m\per\s} and \SI{3.95}{\m\per\s}, respectively, which is below the limits set in the SAE J1263 standard \cite{J1263_201003}).
All experiments are performed in a time window of 4 days.
Moreover, throughout every experiment,  the auxiliary load on both vehicles was maintained at the same level.

\subsection{Energy consumption results}
\begin{figure*}
\centering
\includegraphics[width=\textwidth]{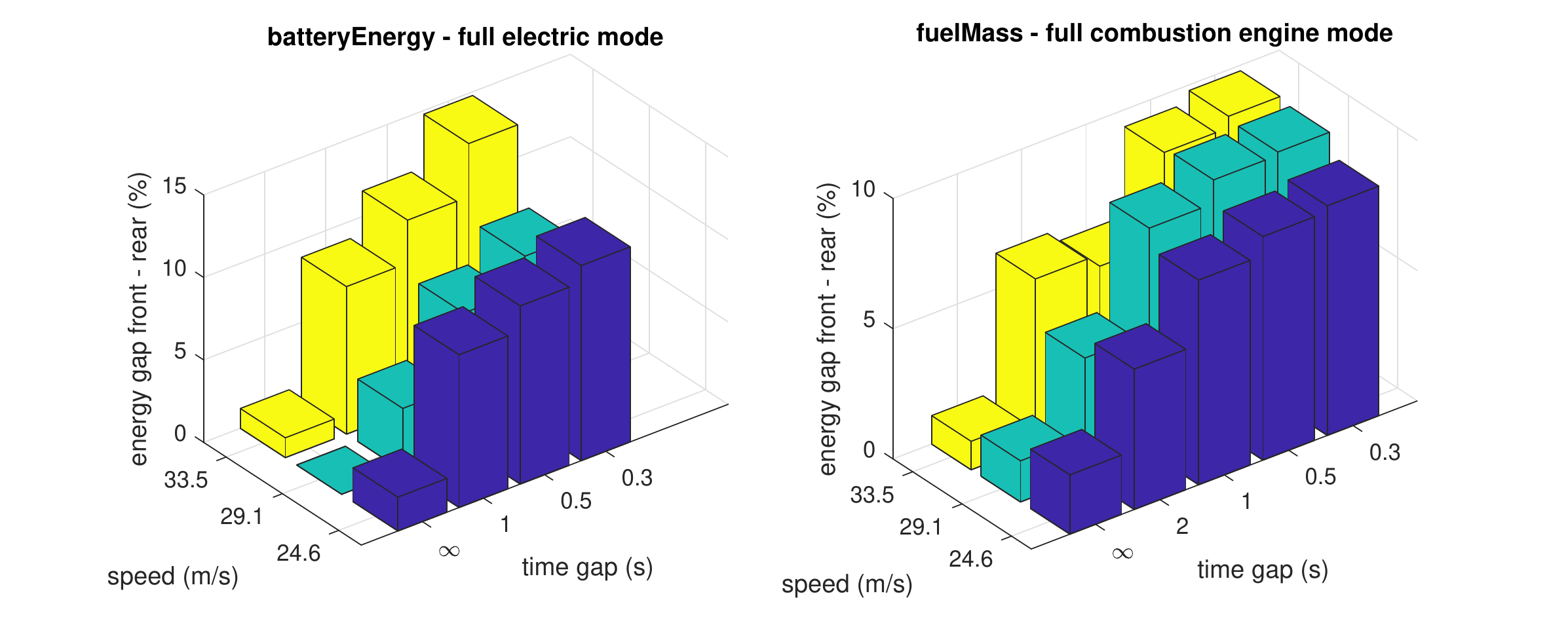}
\caption{Percentage variation of electric energy consumed (rear vs front) in full electric mode (left, $\mu=FE$) and fuel consumed (rear vs front) in full combustion engine mode (right, $\mu=FC$), as a function of speed and time gap.}
\label{fig:drag:cdcs}
\end{figure*}

Figure~\ref{fig:drag:cdcs} shows the energy consumption results from the tests listed in Table~\ref{tab:experment-summary}.
Note that these results are obtained with a homogeneous two-vehicle platoon composed of Hyundai Ioniq PHEVs.
Clearly, results will be different for other vehicles or the heterogeneous platoon.
The left chart refers to data collected in full electric mode ($\mu = FE$), while the right chart refers to data collected in full combustion engine mode ($\mu = FC$).
Each chart shows time gap (in seconds) and vehicle speed (in meters per second) on the horizontal axes, and the percentage variation of energy consumption between front and rear vehicle on the vertical axis.
The percentage variation is computed on the integral of battery power $P_b$ for full electric data, and on the integral of fuel power $P_f$ for full combustion engine data.

In Figure~\ref{fig:drag:cdcs}, the time gap is labeled as $\infty$ for experiments performed on isolated vehicles; these tests were performed to gather baseline data, as well as for comparing the performance of the two vehicles.
Both figures show that, in isolated conditions and at different speeds, the fuel/energy consumption of the two test vehicles differ by at most \SI{3}{\percent}.
Such difference can be attributed to small variability between the two  vehicles.

The collected data enable direct evaluation of the energy performance variation (improvement) after the powertrain.
In full combustion engine mode, the fuel consumption is reduced between \SI{4}{\percent} and \SI{10}{\percent}.
In full electric mode, the electric energy consumption is reduced between \SI{4}{\percent} and \SI{15}{\percent}.
In both powertrain modes, the air drag appears to consistently reduce with time gap, for a fixed speed except one instance.
At the speed of \SI{33.5}{\m \per \second}, the fuel energy consumption was slightly reduced (less than $0.5\%$) when the time gap increased from \SI{1}{s} to \SI{2}{s}.
We believe that this is due to imperfect lateral alignment between two vehicles. All the other parameters such as wind, weight, auxiliary load are either corrected for or kept within small bounds.

\subsection{Interpretation of the results}

The data presented above are purely based on our measurements, in the listed operating conditions.
We now provide an interpretation of the same data in terms of the vehicle longitudinal model \eqref{eq:long_dyn} and powertrain model \eqref{eq:pt_model}.
In essence, we propose a simple model which generalizes the results above to any operating condition.
To this aim, we use an existing model of the powertrain components (to connect the measurements of the energy consumption to the estimated wheel torque) and we fit a longitudinal vehicle model.

First, we show that our estimated wheel torque and our powertrain model explain well the energy consumption values that we found in the experiments.
For all the data points in our measurements, we compute the predicted consumption of battery energy $\hat{P}_b$ and fuel energy $\hat{P}_f$ according to model \eqref{eq:pt_model}, using the measured wheel torque $T_w$, wheel speed $\omega_w$, auxiliary power $P_a$, and powertrain mode $\mu$ as inputs.
Figure~\ref{fig:pt_histograms} shows the histograms of the normalized residuals
\begin{align*}
e_b = \frac{P_b - \hat{P}_b}{RMS(P_b - \hat{P}_b)} , \\
e_f = \frac{P_f - \hat{P}_f}{RMS(P_f - \hat{P}_f)} .
\end{align*}

\begin{figure}
\centering
\includegraphics[width=\columnwidth]{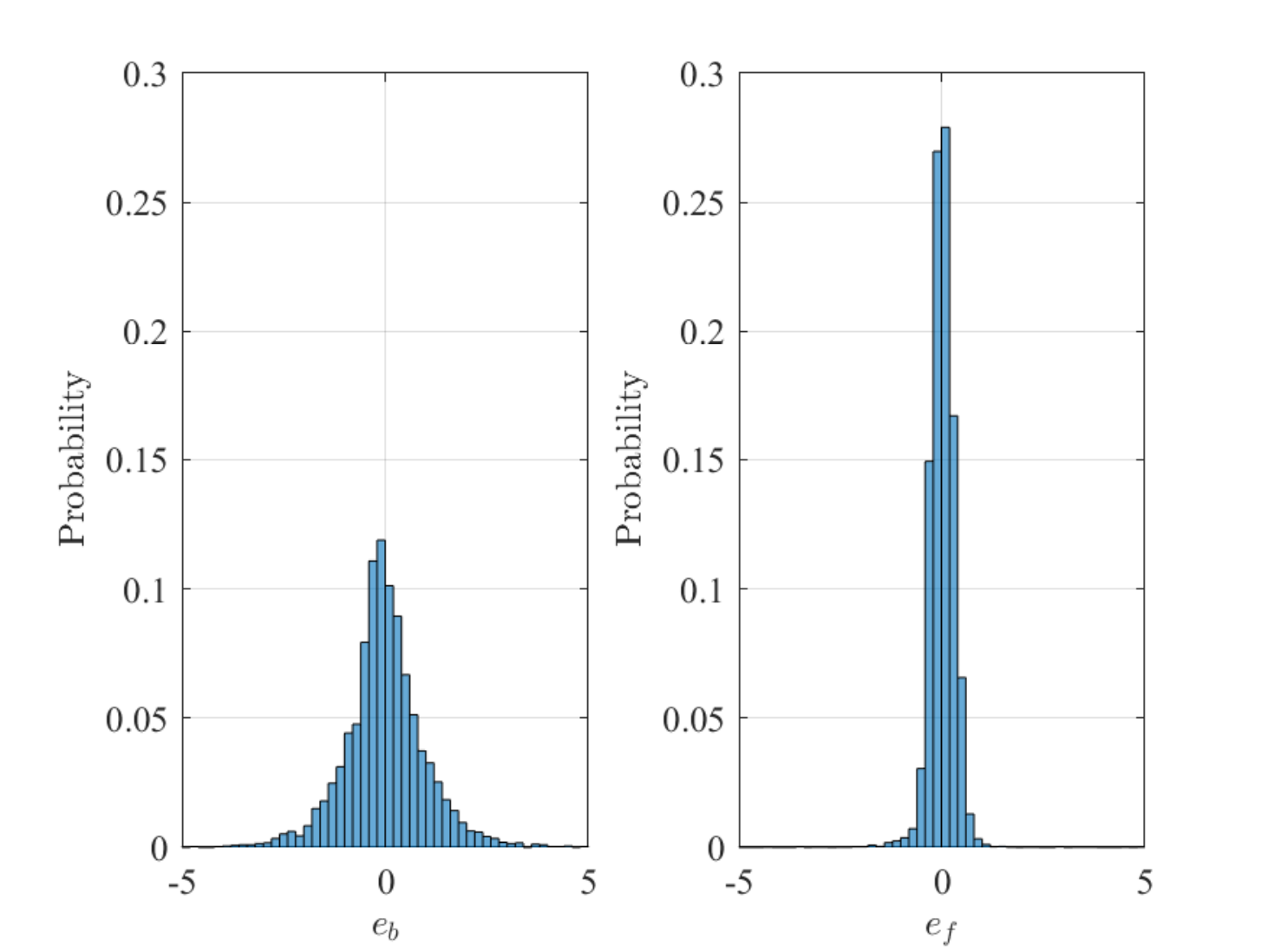}
\caption{Histograms of the normalized residuals $e_b$ and $e_f$.}
\label{fig:pt_histograms}
\end{figure}

Now, we fit the longitudinal model \eqref{eq:long_dyn}.
The model parameters $C_r, C_v, C_x$ are known to depend on a number of factors, including gear ratio, vehicle speed, tire pressure, road surface, and inter-vehicle distance \cite{Guzzella2013}.
We assume that:
\begin{itemize}
    % \item the equivalent mass of the rotating parts is a function of the powertrain mode, $M_r = M_r(\mu)$;
    \item the rolling coefficient $C_r$ is constant;
    \item the viscous coefficient $C_v$ is constant;
    \item the air drag coefficient is a function of inter-vehicle distance, $C_x = C_x(d)$.
\end{itemize}

To fit the model with mode- and distance-dependent parameters, we first define the cost function
\begin{equation*}
\begin{split}
    J(\bar{d},\bar{\mu}) &= \\
    &\lVert F_w(k) -  m a(k) - m g C_r - C_v v(k) \\
    &- \frac{1}{2} \rho A C_x(\bar{d}) v(k)^2 \rVert , \\ &\forall k \in \mathcal{K}(\bar{d},\bar{\mu}) := \left\{ k | d(k) = \bar{d}, \mu(k) = \bar{\mu} \right\} ,
\end{split}
\end{equation*}
where $k$ is a datapoint index, $\bar{d}$ and $\bar{\mu}$ are the values of (target) $d$ and $\mu$ allowed by Table~\ref{tab:experment-summary} (in symbols $\bar{d} \in \mathcal{D}$ and $\bar{\mu} \in \mathcal{M}$), and $\mathcal{K}$ is the cluster of all datapoint indexes corresponding to $\bar{d}$ and $\bar{\mu}$.
We formulate the following optimization problem:
\begin{equation}
\label{eq:vd-cx-fitting}
\begin{aligned}
    & &\underset{C_r, C_v, C_{x}(\bar{d}), \forall \bar{d} \in \mathcal{D}, \forall \bar{\mu} \in \mathcal{M}}{\min} &\sum_{\bar{d} \in \mathcal{D}, \bar{\mu} \in \mathcal{M}} J(\bar{d},\bar{\mu}) \\ 
    &  &\text{s.t.}   &C_r \geq 0 , \\
    &  &   &C_v \geq 0 , \\
     &  &   & C_{x}(\bar{d}) \geq 0 , \forall \bar{d} \in \mathcal{D}\\
\end{aligned}
\end{equation}
where $C_x(\bar{d})$ is defined using the vehicular air drag reduction model from \cite{hucho2013aerodynamics} as 
\begin{equation}
\label{eq:cx-model}
    C_x(d) = C_{x,0} \bigg( 1 - \dfrac{C_{x,1}}{d + C_{x,2}} \bigg) .
\end{equation}
We parse the problem with YALMIP \cite{Lofberg2004}, solve it with IPOPT \cite{wachter2009ipopt} on \SI{80}{\percent} of the data, and find the parameters summarized in Table~\ref{tab:fitted-param}.
% and Figure~\ref{fig:cx_fit}.
% Figure~\ref{fig:cx_fit} shows the values of $C_x(\bar{d})$ and the corresponding fit of the model \cite{hucho2013aerodynamics}
% \begin{equation}
% \label{eq:cx-model}
%     C_x(d) = C_{x,0} \bigg( 1 - \dfrac{C_{x,1}}{d + C_{x,2}} \bigg) .
% \end{equation}

\begin{table}[]
\caption{Known (top) and fitted (bottom) parameters for the rear vehicle.}
\centering
\begin{tabular}{ll}
\toprule
Model parameter & Value \\
\midrule
$M$ & \SI{1844}{\kg} \\
$R_w$ & \SI{0.288}{\m} \\
$\rho$ & \SI{1.206}{\kg\per\cubic\m} \\
$A$ & \SI{2.629}{\square\m} \\
\midrule
$C_r$ & 0.0093 \\
$C_v$ & 0 \\
$C_x(d)$ & $ 0.3350 \bigg( 1-\dfrac{68.3193}{d+142.4522} \bigg)$ \\
\bottomrule
\end{tabular}
\label{tab:fitted-param}
\end{table}

% \begin{figure}
% \centering
% \includegraphics[width=\columnwidth]{figure/cx_fit.eps}
% \caption{Values of $C_x(d)$ found solving Problem~\eqref{eq:vd-cx-fitting} (blue dots) and the corresponding fit of the model \eqref{eq:cx-model} (red solid line).}
% \label{fig:cx_fit}
% \end{figure}

% We validate the resulting model on the remaining \SI{20}{\percent} of the data.
Figure~\ref{fig:vd_histogram} shows the histogram of the normalized residuals for the validation dataset:
\begin{equation*}
e_w = \frac{T_w - \hat{T}_w}{RMS(T_w - \hat{T}_w)} .
\end{equation*}
For all the validation datapoints, we compute the predicted wheel torque $\hat{T}_w$ according to model \eqref{eq:long_dyn} and the parameters in Table~\ref{tab:fitted-param}, using the measured speed $v$ and inter-vehicle distance $d$ as inputs ($\vartheta = 0$ throughout our experiments).

% \hl{[comment]}

\begin{figure}
\centering
\includegraphics[width=\columnwidth]{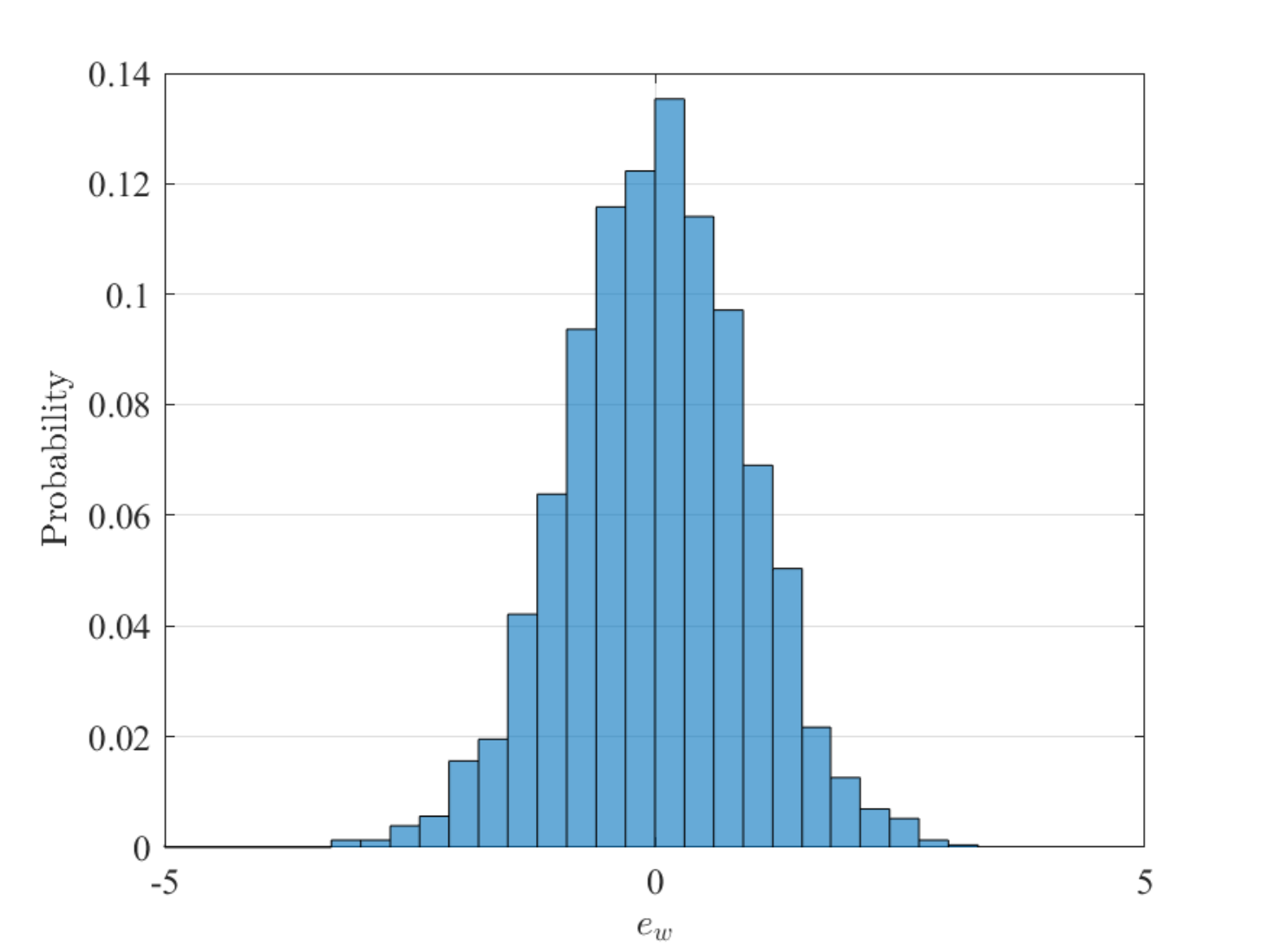}
\caption{Histogram of the normalized residuals $e_w$.}
\label{fig:vd_histogram}
\end{figure}

% Finally, We further evaluate the quality of the model on a large road driving dataset, collected on the same car(s).
% \hl{explain dataset... tbd}

\section{Cooperative adaptive cruise control with model predictive control}
\label{sec:cacc_design}

We have shown in Section~\ref{sec:veh_dyn_model} that reducing the inter-vehicular gap can reduce, at least at steady state, the energy consumption of the tail vehicle(s).
% Small inter-vehicular gap also has a direct effect on the road throughput \cite{smith2019balancing}.
In this section, we propose a CACC design which exploits this finding to save energy, i.e. a CACC that enhances energy efficiency (and possibly road throughput) by operating vehicles at small relative distance, while avoiding collisions. 
In particular, we investigate how leveraging V2V communication, which can offer a fast yet accurate measurement of the front vehicle position, velocity, acceleration, and potentially forecasts thereof, leads to the energy saving which is quantified by using the experimental performance in Figure~\ref{fig:pt_histograms}. 
It is noted that this section only deals with the simulation results. Readers can refer to \cite{Youtube_plat_coord, smith2020improving} for the visual demonstration of our compact platoon system.

Our CACC controller oversees the interaction between a vehicle (referred to as the \textit{controlled vehicle}) and the vehicle ahead (referred to as the \textit{front vehicle}) so that their inter-vehicular gap is minimized.
The front vehicle sends to the controlled vehicle a V2V message which can be useful in shortening this gap. 
% The V2V message contents are discussed in Subsection \ref{sec:v2v_msg}.
Our proposed control design is based on Robust MPC which can systematically handle multiple constraints in an uncertain system.

% The reason is that the on-board sensors, the communication modules, and the controllers in the vehicle are always attentive unlike human drivers who get easily distracted causing more reaction time.
% Using the method in \cite{shalev2017formal}, we can easily see the effects of reduced reaction time on the minimum safety distance.
% Figure~\ref{fig:min_safe_dist} shows the minimum safety distance for different response times. 
% As shown, the higher the reaction time, the higher the minimum safety distance becomes for every velocity of the controlled vehicle.

% \begin{figure}
% \centering
% \includegraphics[width=\linewidth,height=\textheight,keepaspectratio]{min_dist.eps}
% \caption{Minimum Safety Distance for Different Reaction Times} \label{fig:min_safe_dist}
% \end{figure}

% Our porposed CACC design is based on Model Predictive Control (MPC).
% The proposed approach seeks to reduce the vehicular gap using V2V communication.
% % Figure~\ref{fig:cont_scheme} depicts the block diagram of the controlled system, which is subject to measurement delay, measurement noise, and the acceleration of the front vehicle.
% In order to systematically handle our uncertain system, we use a robust MPC design as a CACC controller.
% \begin{figure}
% \centering
% \includegraphics[width=\linewidth,height=\textheight,keepaspectratio]{Controller_schematic}
% \caption{Block diagram of Cooperative Adaptive Cruise Control (CACC) for the system with measurement delay and disturbance} \label{fig:cont_scheme}
% \end{figure}

\subsection{Control Oriented Modeling}
\label{sec:control_oriented_msg}

In model \eqref{eq:long_dyn}, the aerodynamic drag coefficient $C_x$ is coupled to the distance to the front vehicle, as seen in \eqref{eq:cx-model}, while the road grade $\theta$ depends on the vehicle absolute position.
Using such a model in an MPC framework requires a strategy to solve the resulting nonlinear optimization problem in real time and considering that the solver may converge to local minima. 
We introduce a control oriented model, in which $C_x$ is modeled as \eqref{eq:cx-model} and $\theta$ is approximated as a constant along the MPC prediction horizon. 

\subsubsection{Control oriented vehicle dynamics} 
\label{sec:cont_oriented_veh_dyn}

We consider the problem of controlling a single vehicle and its interaction with the front vehicle. Therefore, the state variables are the distance to the front vehicle, the velocity of the controlled vehicle, and the velocity of the front vehicle, denoted as $d$, $v$, and $v^f$, respectively.
The input is the wheel torque of the controlled vehicle, denoted as $T^w$, and the front vehicle acceleration is an uncertain variable, denoted as $a^f$.
The control oriented model of the vehicle longitudinal dynamics is a switched uncertain system; at time $t$, the one-step ahead prediction is expressed as 
\begin{subequations}
\label{eq:cont_oriented_long_dyn}
\begin{align}
d(k+1 |t) &= d(k |t) + t_s(v^f(k |t) - v(k |t)),\\
v(k+1 |t) &= \bigg( v(k |t) + \frac{t_s}{m}  \Big(\frac{T^w(k |t)}{R_w}  \nonumber \\
& \qquad - mg(\sin{\vartheta(t)} + {C}_r\cos{\vartheta(t)}) \nonumber \\
& \qquad- \frac{1}{2} \rho A C_x(d(k |t))  v(k |t)^2 \Big) \bigg)^+  \label{eq:ego_veh_dyn},\\
v^f(k+1 |t) &=  \Big( v^f(k |t) + t_s a^f(k |t) \Big)^+, \label{eq:front_veh_dyn}
\end{align}
\end{subequations}
where $\vartheta(t)$ is a constant approximated at time $t$, and $C_x(d(k |t))$ is from \eqref{eq:cx-model}. 
The $(\bullet)^+$ operators in the controlled vehicle velocity equation \eqref{eq:ego_veh_dyn} and the front vehicle velocity equation \eqref{eq:front_veh_dyn} are to prevent vehicle velocities from taking negative values.
In the remainder of the paper, the dynamic model \eqref{eq:cont_oriented_long_dyn} is denoted by
\begin{equation}
x(k+1 |t) = f_t(x(k|t), u(k|t), a^f(k|t)) ,
\label{eq:vehicle_model_short}
\end{equation}
where $x(k|t) = [d(k|t),\, v(k|t), \, v^f(k|t)]^{\top}$ and $u(k|t) = T^w(k|t)$.

\subsubsection{State and input constraints and uncertainty bounds} 
\label{sec:constraints}

State constraints are enforced to prevent collision with the front vehicle and violation of speed limits. Input constraints are enforced to account for the physical limitations of the actuators of the controlled vehicle.
The state and input constraints are compactly expressed in the form
\begin{subequations}\label{eq:stateinputconstraints}
\begin{align}
x(k|t) \in \mathbb{X}&:= \{(d(k|t),v(k|t)):  d_{\textnormal{min}} \leq d(k|t),\,  \nonumber \\
& \qquad \qquad \qquad \qquad \qquad \,\, 0 \leq v(k|t) \leq v_{\textnormal{max}}  \}, \, \label{eq:stateconstraints}\\
u(k|t) \in \mathbb{U}&:= \{T^w(k|t): T^w_{\textnormal{min}} \leq T^w(k|t) \leq T^w_{\textnormal{max}} \}. \label{eq:inputconstraints}
\end{align}
\end{subequations}
where $d_{\textnormal{min}}$ denotes the minimum safety distance and  $v_{\textnormal{max}}$ is the maximum velocity.

The front vehicle acceleration is an uncertain, bounded variable:
\begin{equation}
\label{eq:a_front_set}
a^f(k|t) \in \mathbb{A}^f :=\{ a^f(k|t) : a^f_{\textnormal{min}} \leq a^f(k|t) \leq
a^f_{\textnormal{max}} \}.
\end{equation}
As we further discuss in the remainder of the paper, the uncertainty bounds $a^f_{\textnormal{min}}, a^f_{\textnormal{max}}$ can be fixed and assumed \emph{a priori} by the controlled vehicle, or can be communicated by the front vehicle and adapted to the current operating condition, if an appropriate V2V communication protocol has been established.
In the latter case, the uncertainty set $\mathbb{A}^f$ can expand or shrink, based on the maximum braking and acceleration bounds transmitted by the front vehicle via V2V communication.

\subsection{V2V message structure}
\label{sec:v2v_msg}
We now detail a possible V2V message structure, which can carry the information required to positively affect the vehicle energy consumption.
Under the predecessor-following communication topology \cite[Fig. 4]{guanetti2018control}, V2V messages are only sent from a vehicle to the vehicle immediately following it.
In our setup, the controlled vehicle receives, from the front vehicle, a V2V message $m^f$, composed as
\begin{equation} \label{eq:v2v_msg}
\begin{aligned}
m^f(t) = \{ s^f&(t-h), \,v^f(t-h),\,\mathbb{A}^f, \nonumber \\
\,& \,[\hat{a}^f(t-h|t-h),\dots,\hat{a}^f(t-h+N_T|t-h)]  \}
\end{aligned}
\end{equation}
where $h$ is a communication delay; $s^f(t-h)$ and $v^f(t-h)$ are the absolute location and the velocity of the front vehicle at time $t-h$; 
$\mathbb{A}^f$ is the acceleration bounds as expressed in \eqref{eq:a_front_set}.
$N_T$ is defined as the trust horizon, which indicates the number of time steps for which the front vehicle trusts its acceleration forecast to take the same values of its true acceleration, i.e. $\hat{a}^f(j|t-h) = \bar{a}^f(j) , \, j = t-h , ... , t-h + N_T$.
% $\hat{a}^f(j|t-h)$ is the trusted acceleration forecast of the front vehicle at time step $j$ using the available information at time $t-h$; $N_T$ is the trust horizon, which indicates the number of time steps for which the acceleration forecast is trusted and it takes the same values of the true acceleration, i.e. $\hat{a}^f(j|t-h) = \bar{a}^f(j) , \, j = t-h , ... , t-h + N_T$.

\subsection{Model Predictive Control Formulation}
\label{sec:mpc_form}

We formulate the CACC problem as the following constrained finite horizon optimal control problem at time $t$.
\begin{subequations}\label{eq:mpcproblem}
\begin{align}
&\displaystyle \min_{u_{\cdot |t}} \,\,  
\sum_{k=t+1}^{t+N_p} \norm{ {d}(k |t) - d_\textnormal{min}}_{Q}^2 + \sum_{k=t}^{t+N_p-1} \norm{u(k |t)}_{R}^2 \nonumber\\
&\;\;\, \qquad +\sum_{k=t+1}^{t+N_p-1} \norm{u(k |t) - u(k-1|t)}_{D}^2 \label{eq:cost_fcn}\\
& \textnormal{subject to} \nonumber \\
&\;\;\, {x}(t|t) = \Tilde{x}(t|t), \label{eq:init_condition}\\
&\;\;\, {x}(k+1|t) = f_{t}( {x}(k|t), u(k|t), \Tilde{a}^f(k)) \nonumber \\ &\;\;\, \qquad\qquad\qquad\qquad\qquad\qquad \forall k \in [t,...,t+N_p-1], \label{eq:pred_system_update} \\ 
&\;\;\, {x}(k|t) \in \mathbb{X}, \, u(k|t) \in \mathbb{U} \quad\forall k \in [t,...,t+N_p-1], \label{eq:system_cons}\\
&\;\;\, {x}(t+N_p|t) \in \mathbb{C} , \label{eq:term_cons}
\end{align}
\end{subequations}%
where the cost function \eqref{eq:cost_fcn} penalizes deviations from the minimum safety distance $d_{\textnormal{min}}$, the actual excitation, and the jerk, with the weighting terms $Q$, $R$, and $D$, respectively.
Note that because the goal of the controller is to minimize the distance gap with the front vehicle in order to save energy, we use the minimum distance $d_{\textnormal{min}}$ to be the desired tracking distance. 
For the same purpose, $Q$ is set to be much larger than $R$ and $D$.

Moreover, it is noted that in order to guarantee string stability for the system controlled by our time-domain MPC control design, additional constraints can be imposed. These include restricting the magnitude of the controlled vehicle's acceleration within that of the preceding vehicle \cite{kianfar2011receding} or imposing the move-suppression distance constraint \cite{dunbar2012distributed}.

The initial state is set in \eqref{eq:init_condition} where $\Tilde{x}(t|t)$ is the estimated state at time $t$ using the available information at time $t$.
In a vehicle without connectivity, distance to and velocity of the front vehicle are only available via on-board sensors such as radar or camera.
However, these measurements are subject to significant delay and noise \cite{floudas2005survey}.
V2V communication can reduce the noise and the delay in the perception of the front vehicle down to the communication latency.
In order to compensate for any perception delay (either from measurement or V2V communication latency), we use the following shifting equation:
\begin{subequations}
\label{eq:rob_state_pred}
\begin{align}
&\Tilde{x}(t-h|t) = \begin{bmatrix} \hat{d}(t)  \\ \hat{v}(t)^+ \\ (\hat{v}^f(t) )^+ \end{bmatrix} \label{eq:rob_ob_init} \\
&\Tilde{x}(k+1 |t) = \begin{bmatrix} \Tilde{d}(k |t) + t_s(\Tilde{v}^f(k |t) - \Tilde{v}(k |t))   \\ \hat{v}(k+1) \\ \Big( \Tilde{v}^f(k |t) + t_s \Tilde{a}^f(k |t) \Big)^+ \end{bmatrix} \nonumber \\
& \qquad \forall k=t-h, t-h+1, \dots, t-1 , \label{eq:rob_ob_dyn}
\end{align}
\end{subequations}
where $\hat{d}(t):=\bar{d}(t-h)+n^d(t-h)$ and $\hat{v}^f(t):=\bar{v}^f(t-h)+n^{v^f}(t-h)$ are the front vehicle distance and velocity with a known, constant, non-negative delay $h$ either measured or communicated at time $t$, respectively;
$n^d(t-h)$ and $n^{v^f}(t-h)$, such that $|n^d(t-h)| \leq n^d_{\textnormal{max}}$ and $|n^{v^f}(t-h)| \leq n^{v^f}_{\textnormal{max}}$, are the noises for distance and front vehicle velocity measurements, respectively;
$\hat{v}(t):=\bar{v}(t)$ is the controlled vehicle velocity accurately and immediately measured by an on-board speed sensor obtained at time $t$; 
$\bar{u}(k)$ is the input at time step $k$;
$\Tilde{a}^f(\cdot)$ is defined as
\begin{equation} 
\label{eq:front_a_def}
\Tilde{a}^f(k) = 
\begin{cases}
\hat{a}^f(k|t-h) , &\mbox{if} \, k \leq t-h+N_T , \\
a^f_{\textnormal{min}} , & \mbox{else} ,
\end{cases}
\end{equation}
where $\hat{a}^f(k|t-h)$ is the communicated front vehicle acceleration forecast at time $k$ obtained at time $t$; $a^f_{\textnormal{min}}$ is the lower bound for the set $\mathbb{A}^f$; $N_T$ is the trust horizon.
In the case of ACC without V2V communication, the front vehicle acceleration forecast and its bounds are not available. Hence, the front vehicle acceleration forecast  $\Tilde{a}^f(\cdot)$ is assumed to be the conservative (over-estimated) value of the maximum braking at all times.
Note that, with the simple structure of the system dynamics \eqref{eq:cont_oriented_long_dyn}, obeying the constraints \eqref{eq:stateinputconstraints} with the front vehicle acceleration \eqref{eq:front_a_def} is a sufficient condition to guarantee \eqref{eq:stateinputconstraints} for any front vehicle acceleration in $\mathbb{A}^f$.  
Also, in case of the time-varying delay, the author suggests to use the upper bound of the time-varying delay in place of $h$. This will add the conservatism to the system but it can still have the capability to guarantee to maintain above the minimum safety distance although it can fail to ensure that the vehicle does not exceed the maximum vehicle speed as the system would be under-estimating the air drag.

\eqref{eq:pred_system_update} represents the vehicle dynamics update defined by \eqref{eq:vehicle_model_short};
\eqref{eq:system_cons} represents the system state and input constraints \eqref{eq:stateinputconstraints};
\eqref{eq:term_cons} is the robust control invariant terminal set constraint detailed in Appendix. While the problem \eqref{eq:mpcproblem} is formulated as a mixed integer optimization (MIP) problem due to the fact that the set $\mathbb{C}$ is the union of polyhedrons at each different front vehicle velocity, it can actually be casted as a non-integer program to lessen the computational burden of MIP. 
The front vehicle velocity of the front vehicle is offered a-priori  (by the combination of the V2V forecast and the robust velocity predictor) to the computation or solving of the problem. 
This allows the problem to know the minimum possible front vehicle velocity at the end of the horizon and which polyhedron to use in the terminal set. This way, the computational burden is light enough for the online controller implementation. Our previous experimental test proves the feasibility of the real-time implementation of the similar control designs using the DSPACE ECU \cite{turri2017model, smith2020improving, bae2019real}.

Note that due to the simple structure of the system \eqref{eq:cont_oriented_long_dyn}-\eqref{eq:a_front_set}, the constraints \eqref{eq:stateinputconstraints} are satisfied for every possible realization of the front vehicle acceleration if the system satisfies \eqref{eq:stateinputconstraints} with the front vehicle acceleration prediction \eqref{eq:front_a_def} \cite{lefevre2016learning}. 

Solving \eqref{eq:mpcproblem}, we have the following optimal inputs and states:
\begin{subequations} \label{eq:optimal_sol}
\begin{align}
u^{*}(t) & = [u^{*}(t|t), u^{*}(t+1|t), ..., u^{*}(t+N_p-1|t)].\label{eq:optimal_sol_inputs} \\
{x}^{*}(t) & = [{x}^{*}(t|t), {x}^{*}(t+1|t), ...,  {x}^{*}(t+N_p|t)], \label{eq:optimal_sol_states}
\end{align}
\end{subequations}
The first input $u^{*}(t|t)$ is applied to the system during the time interval $[t,t+1)$.
At the next time step $t+1$, a new optimal control  problem in the form of \eqref{eq:mpcproblem}, based on new measurements of the state, is solved over a shifted horizon, yielding a \textit{moving} or \textit{receding} horizon control strategy 
\begin{equation}
u(t) = u^{*}(t|t).\label{eq:optimal_input} 
\end{equation}
and the closed loop system is written as 
\begin{equation} \label{eq:cls_sys} 
\bar{x}(t+1) = f(\bar{x}(t), u^{*}(t|t), \bar{v}^f(t)) . 
\end{equation}

% If our system is time invariant and has no model mismatch between the model we used in \eqref{eq:rob_state_pred} and the real system, our robust state observer \eqref{eq:rob_state_pred} is a biased estimator of the true current state, i.e., $\Tilde{d}_{i|t-h} \leq \bar{d}_{i}$ and $\Tilde{v}^f_{i|t-h} \leq \bar{v}^f_{i}$.
% Later in this work, we will show that any input trajectory which satisfies the constraints \eqref{eq:stateinputconstraints} when the system is initialized with the true state also preserves the same constraint satisfaction when the system is initialized with our robust state estimate \eqref{eq:worst_v(t)raj_rob_pred}. 
% Its theorem and proof are provided in the subsection.~\ref{sec:Pers_safety}.

% It is noted that the system with input delay can also be treated in the similar manner as the system with measurement delay. While the system with measurement delay predicts  the current state using the delay state measurement and computes the control action based on the predicted current state, the system with input delay predicts the future state using the current state and computes the control action based on the predicted future state.

% \begin{remark}
% When the magnitude of delay $h$ is not exactly known and only its upper bound is available, the robust state observer must assume that the controlled vehicle is fully decelerating until it reaches a full stop, i.e., $a$ \eqref{eq:rob_ob_dyn}
% \end{remark}

\section{Performance analysis of CACC} 
\label{sec:simulation}

In this section we show - through simulations - how the information and predictions shared via V2V communication can be leveraged to reduce the energy consumption in the small platoon of CAVs.
We focus on the value of V2V communication for reducing the inter-vehicular gap, which is linked to the aerodynamic drag, as discussed in Section \ref{sec:veh_dyn_model}.
We consider a homogeneous vehicle platoon of two Hyundai Ioniq PHEVs, the model of which is identified in Section~\ref{sec:veh_dyn_model}, traveling on a level road ($\vartheta = 0$), as depicted in Figure~\ref{fig:platoon_scheme}.
The front vehicle is running at a constant velocity of $\SI{25}{\meter/\second}$; the following vehicle (the controlled vehicle) has an initial velocity of $\SI{15}{\meter/\second}$ and its motion is regulated by the controller \eqref{eq:mpcproblem}-\eqref{eq:optimal_input} where the model $f_t(\cdot)$ is described by the vehicle dynamics \eqref{eq:cont_oriented_long_dyn}; their initial inter-vehicular distance is $\SI{50}{\meter}$.
The two vehicles communicate different message contents, depending on the scenario.
Unless otherwise specified, environment and controller parameters are listed in Table~\ref{tab:model_cont_param}. Note that as a default setting, we assume that the front vehicle can generate the minimum deceleration of $-6\si{m \per \square \s}$ while the only current measurements of its velocity and distance are available without any errors. 
Moreover, Table~\ref{tab:delay-effect}-\ref{tab:Nt-effect} lists the energy consumption compared to that of the front vehicle under different assumptions about on-board sensors and V2V communication during the steady state phase; i.e., the time period after the controlled vehicle reaches a constant distance gap with the front vehicle; the selected metrics are savings on wheel energy, fuel consumption in FC mode, and battery energy in FE mode.
Note that each column represents the percentage of energy spent compared to that of the front vehicle in the unit of the indicated energy source solely.
The fuel and battery energy consumption calculations are interpolated from the experimental data presented in Figure~\ref{fig:drag:cdcs}.

% \begin{remark}
% Our control design can be applied to a platoon with any number of vehicles.
% Here, because we investigate the effect of our proposed control design on shortening the gap between two vehicles, we focus on a two vehicle platoon.
% \end{remark}

\begin{table*}[!t]
\caption{Environment and controller parameters for different scenarios. VAR: a variable, changing value in each scenario (subsection).}
\centering
\label{tab:model_cont_param}
\begin{tabularx}{0.75\textwidth}{c  c  c  c  c c } 
\toprule
\textbf{Parameter} & \textbf{Description}  & \textbf{Sec~\ref{sec:red_measure_delay}} & \textbf{Sec~\ref{sec:small_dec}} & \textbf{Sec~\ref{sec:trust_vel_forecast}} & \textbf{Sec~\ref{sec:small_vel_uncer}}\\
\midrule
$N_p$ & MPC horizon  & 20 & 20 & 20 & 20  \\
$t_s (s) $ & sampling time   & 0.2 & 0.2 &0.2 & 0.2  \\
$d_{\textnormal{min}} (\si{m}) $ & minimum distance & 5 & 5 & 5 & 5  \\
$v_{\textnormal{max}} (\si{m \per \s} ) $ & maximum velocity & 40 & 40 & 40 & 40  \\
$T^w_{\textnormal{max}} ( \si{Nm}) $ & maximum wheel torque  & 1083 & 1083 & 1083 & 1083  \\
$T^w_{\textnormal{min}} ( \si{Nm}) $ & minimum wheel torque  & -2500 & -2500 & -2500 & -2500  \\
% $Q $ & distance tracking penalty  & 10 & - & - & - & -  \\
% $D $ & jerk penalty  &0.03 & - & - & - & -  \\
$h$ & measurement delay & VAR & 0 & 0 & 0 \\
$a^f_{\textnormal{min}} (\si{m \per \square \s})$ & front vehicle minimum acceleration & -6 & VAR & -6 & -6  \\
$N_T$ & trust horizon & 0 & 0 & VAR & 0 \\
$n_{\textnormal{max}} (\si{m}, \si{m \per \s})$ & front vehicle measurement noise bound   & 0 & 0 & 0 & VAR  \\
\bottomrule
\end{tabularx}
\end{table*}

\begin{table}
\centering
\caption{Comparison of CACC Controllers. Units for energy consumption are $\%$.}
\begin{subtable}{\columnwidth}
\centering
\begin{tabular}{ c | c| c| c}
\addlinespace
  & Wheel & Fuel in FC mode & Battery in FE mode  \\
\hline 
%   \addlinespace
\hline       front vehicle &  100 & 100.0 & 100.0 \\
%   \addlinespace
$h=2$ &  92.5 & 92.7 & 91.7\\
%   \addlinespace
$h=1$ &  91.6 & 92.3 & 91.3\\
%   \addlinespace
$h=0$ &  90.6 & 91.9 & 90.8 
\end{tabular}
\caption{Effects of measurement delay.}
\label{tab:delay-effect}
\end{subtable}
\vfill
\begin{subtable}{\columnwidth}
\centering
\begin{tabular}{ c | c| c| c}
\addlinespace
& Wheel & Fuel in FC mode & Battery in FE mode \\
\hline 
%   \addlinespace
\hline 
front vehicle &  100.0 & 100.0 & 100.0 \\
%   \addlinespace
$a^f_{\textnormal{min}}=-9\si{m \per \square \s}$ & 93.2 & 93.4 & 92.3\\
%   \addlinespace
$a^f_{\textnormal{min}}=-6\si{m \per \square \s}$ & 90.6 & 91.9 & 90.8\\
%   \addlinespace
$a^f_{\textnormal{min}}=-3\si{m \per \square \s}$ & 87.6 & 91.2 & 89.1
\end{tabular}
\caption{Effects of front vehicle maximum deceleration.}
\label{tab:amin-effect}
\end{subtable}
\hfill
\begin{subtable}{\columnwidth}
\centering
\begin{tabular}{ c | c| c| c}
\addlinespace
& Wheel & Fuel in FC mode & Battery in FE mode \\
\hline
\hline       
front vehicle & 100.0 & 100.0 & 100.0 \\
%   \addlinespace
$N_T = 0$ & 90.6 & 91.9 & 90.8\\
%   \addlinespace
$N_T = 3$ & 87.6 & 91.4 & 89.6 \\
%   \addlinespace
$N_T = 8 $ & 85.2 & 90.3 & 89.5
\end{tabular}
\caption{Effects of trust horizon.}
\label{tab:Nt-effect}
\end{subtable}
\end{table}

\subsection{Effects of reduced measurement delay}
\label{sec:red_measure_delay}
In CAVs, the states of the other CAVs are accessible through the on-board sensors and V2V communication.
With smaller delay in the measurement, the current state \eqref{eq:rob_state_pred} can be estimated less conservatively and more accurately. 

Figure~\ref{fig:clplot:4-1} depicts the trajectories of the front and the controlled vehicles, in terms of their inter-vehicular distance, velocities, and wheel torque when the controlled vehicle has the front vehicle measurement with a known delay of \SI{0}{\s}, \SI{0.2}{\s}, and \SI{0.4}{\s}, i.e. $h=0,1,2$, respectively.
With smaller delay, the controlled vehicle decelerates the latest and catches up with the front vehicle more quickly and reaches a smaller constant distance gap ($t \geq \SI{30}{\s}$); this is clearly seen in the zoomed section in the plots. 
% At the steady state, the velocity of the controlled vehicle is always the same as the front vehicle's velocity of \SI{25}{\m \per \s}. 
% The steady state gaps for the delay of \SI{0}{\s}, \SI{0.2}{\s}, and \SI{0.4}{\s} are \SI{24.04}{\m}, \SI{28.39}{\m}, and \SI{32.95}{\m}, respectively.    

One can notice, in Table~\ref{tab:delay-effect}, that a smaller delay decreases wheel energy consumption of the controlled vehicle down to $90.6\%$.
This amount of wheel energy saving is approximately $9\%$ battery energy saving or $8\%$ fuel energy saving.

\begin{figure}
\centering
\includegraphics[width=\columnwidth]{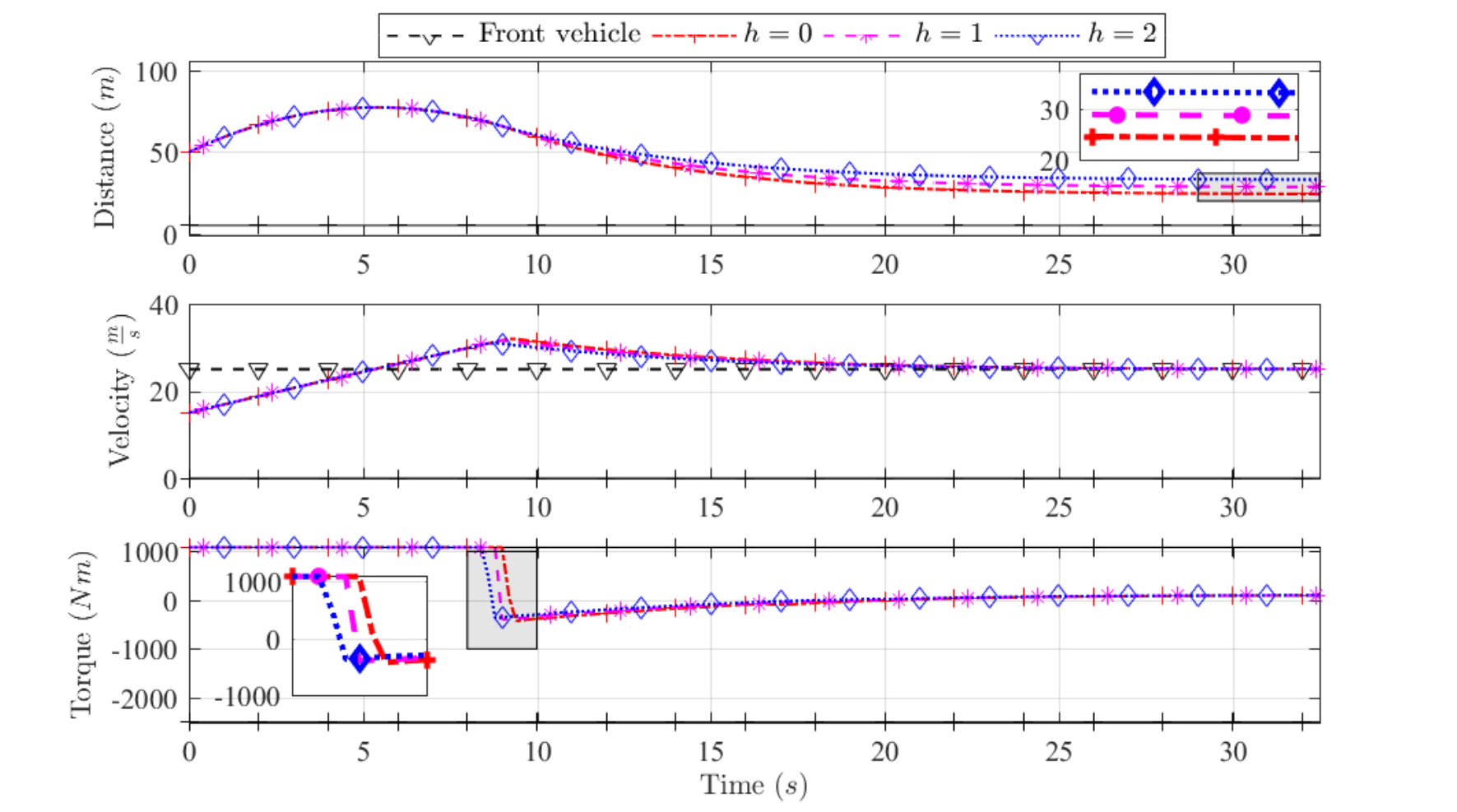}
\caption{Plots of the distance between the controlled and the front vehicles, their velocities, and the wheel torques of the controlled vehicle at different measurement delay $h$.} \label{fig:clplot:4-1}
\end{figure}

\subsection{Effects of the front vehicle's maximum deceleration}
\label{sec:small_dec}
When there is no communication between the two vehicles, the following vehicle may assume that the front vehicle can perform maximum braking (at its physical limit) at any time.
Unfortunately, this assumption causes a conservative behavior to the following vehicle i.e. a larger inter-vehicular distance gap.
This can be mitigated if the following vehicle receives the limit of the front vehicle's maximum braking in their V2V message as shown in equation \eqref{eq:v2v_msg}.

This section shows how the magnitude of the front vehicle's maximum deceleration can affect the performance of the controlled vehicle.
First, it affects the shape of a robust control invariant set used in \eqref{eq:term_cons}. Figure~\ref{fig:set_a_min} shows the robust control invariant sets obtained using the method from Appendix for different magnitudes of the front vehicle's maximum deceleration;  
the smaller the deceleration magnitude, the bigger the robust control invariant set.
Second, with a smaller magnitude of the front vehicle's maximum deceleration, the front vehicle acceleration predictor in \eqref{eq:front_a_def} is less conservative and produces a velocity trajectory with less braking.
As a result of these two effects, the magnitude of the front vehicle's maximum deceleration influences the performance of CACC \eqref{eq:mpcproblem}-\eqref{eq:optimal_input}.

Figure~\ref{fig:clplot:4-2} depicts the trajectories of the front and the controlled vehicles, in terms of their inter-vehicular distance, velocities, and wheel torque when the front vehicle sends to the controlled vehicle its maximum braking rates of $-9\si{m \per \square \s}$, $-6\si{m \per \square \s}$, and $-3\si{m \per \square \s}$. 
The smaller the front vehicle's maximum deceleration magnitude, the faster the controlled vehicle reaches a constant velocity and distance. Moreover, the controlled vehicle's constant distance gap is smaller with a smaller front vehicle maximum deceleration magnitude.
As a result of a shorter constant distance gap, the smaller the front vehicle's maximum deceleration assumption is, the more saving on wheel energy, fuel, or battery the controlled vehicle achieves during the steady state.

\begin{remark}
Another benefit of knowing maximum acceleration and deceleration of the preceding vehicle in advance is that this information can be utilized to guarantee the string stability of our CACC controller as done in \cite[Eq. (25)]{kianfar2011receding}. 
\end{remark}

\begin{figure}
\centering
\includegraphics[width=\columnwidth]{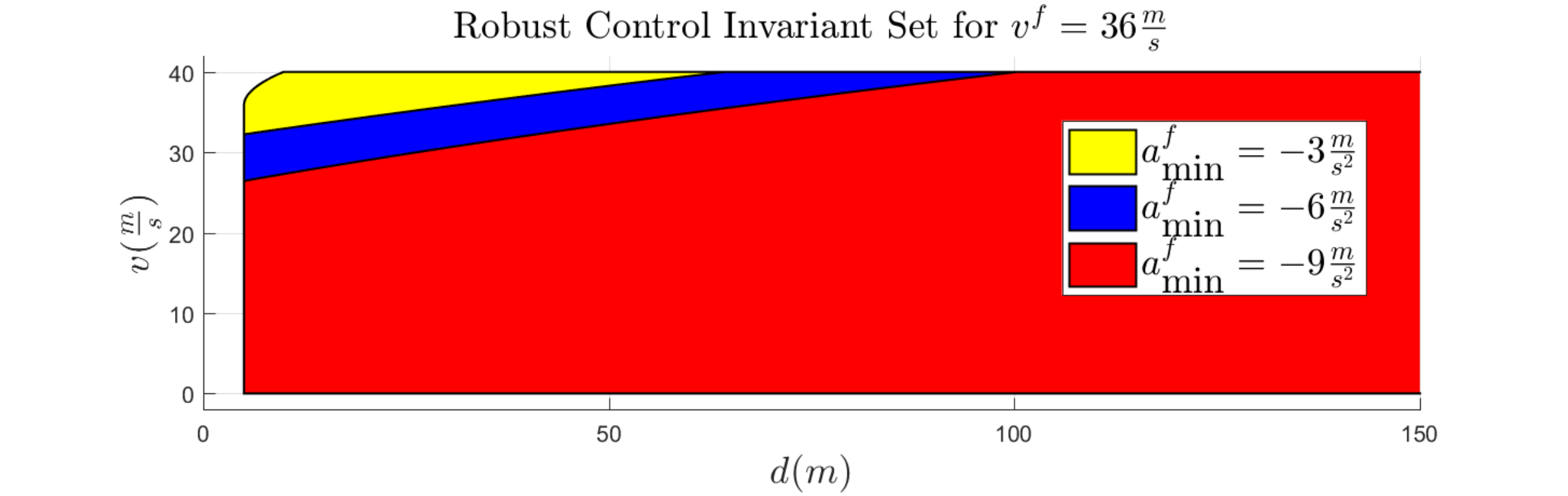}
\caption{Robust control invariant set for different front vehicle's maximum deceleration rates} \label{fig:set_a_min}
\end{figure}

\begin{figure}
\centering
\includegraphics[width=\columnwidth]{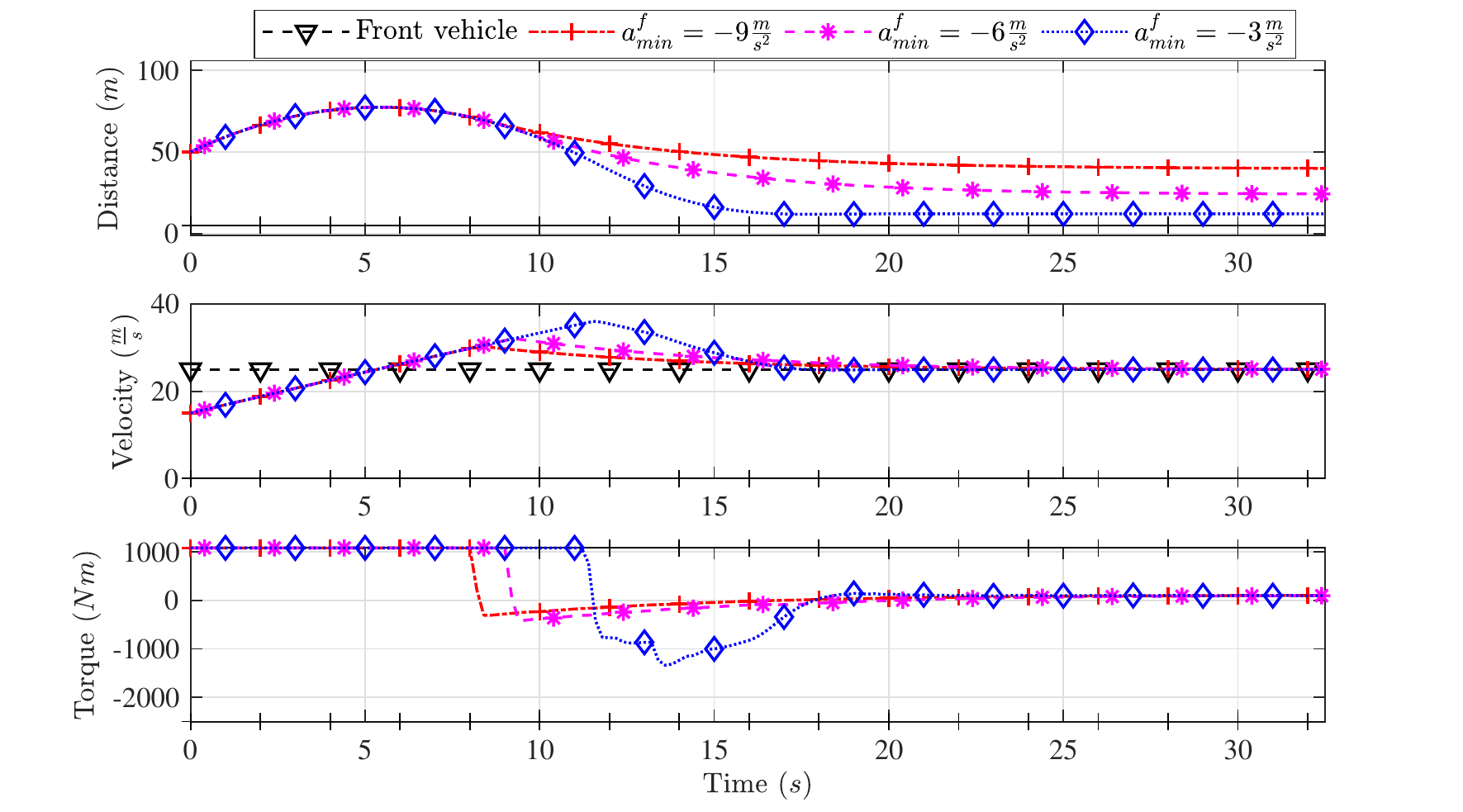}
\caption{Plots of the distance between the controlled and the front vehicles, their velocities, and the wheel torques of the controlled vehicle at the front vehicle's different maximum braking rates $a^f_{\textnormal{min}}$.}\label{fig:clplot:4-2}
\end{figure}

\subsection{Effects of trusting acceleration forecast}
\label{sec:trust_vel_forecast}
This section is devoted to the scenario where the front vehicle shares its acceleration forecasts with $N_T > 0$.
Similar approach and analysis were conducted in \cite{smith2019balancing} for the purpose of maximizing road throughput. 
It is reported in \cite{smith2019balancing,kianfar2013distributed} that the string stability is achieved when the trust horizon $N_T$ is long enough.
In this simulation, we focus on the energy saving effect of a length of a trust horizon $N_T$ in the V2V message \eqref{eq:v2v_msg}.

Figure~\ref{fig:clplot:4-3} depicts the trajectories of the front and the controlled vehicles, in terms of their inter-vehicular distance, velocities, and wheel torque when the front vehicle sends to the controlled vehicle the acceleration forecast with a trust horizon of $0$, $3$, and $8$ steps (0s, 0.6s, and 1.6s, respectively). 
A longer trust horizon has a similar effect to a smaller maximum deceleration magnitude of the front vehicle.
In short, with a longer trust horizon, the controlled vehicle reaches a constant distance gap and velocity faster and this constant gap is smaller.
As seen in Table~\ref{tab:Nt-effect}, The longer the trust horizon is, the more energy saving on wheel energy, fuel, and battery the controlled vehicle achieves during the steady state.

% In both energy consumption, it is seen that the higher the trust horizon is, the less fuel/battery energy the controlled vehicle is approximated to consume.
% With $8$ steps trust horizon, which equals to $1.6$ seconds, fuel and battery energy consumption can be reduced down to $8.62\%$ and $11.76\%$ percents, respectively.

% \begin{remark}
% In every scenario, we observe that during SS, the controlled vehicle consistently has smaller energy consumption on wheel, fuel, and battery.
% \end{remark}

\begin{figure}
\centering
\includegraphics[width=\columnwidth]{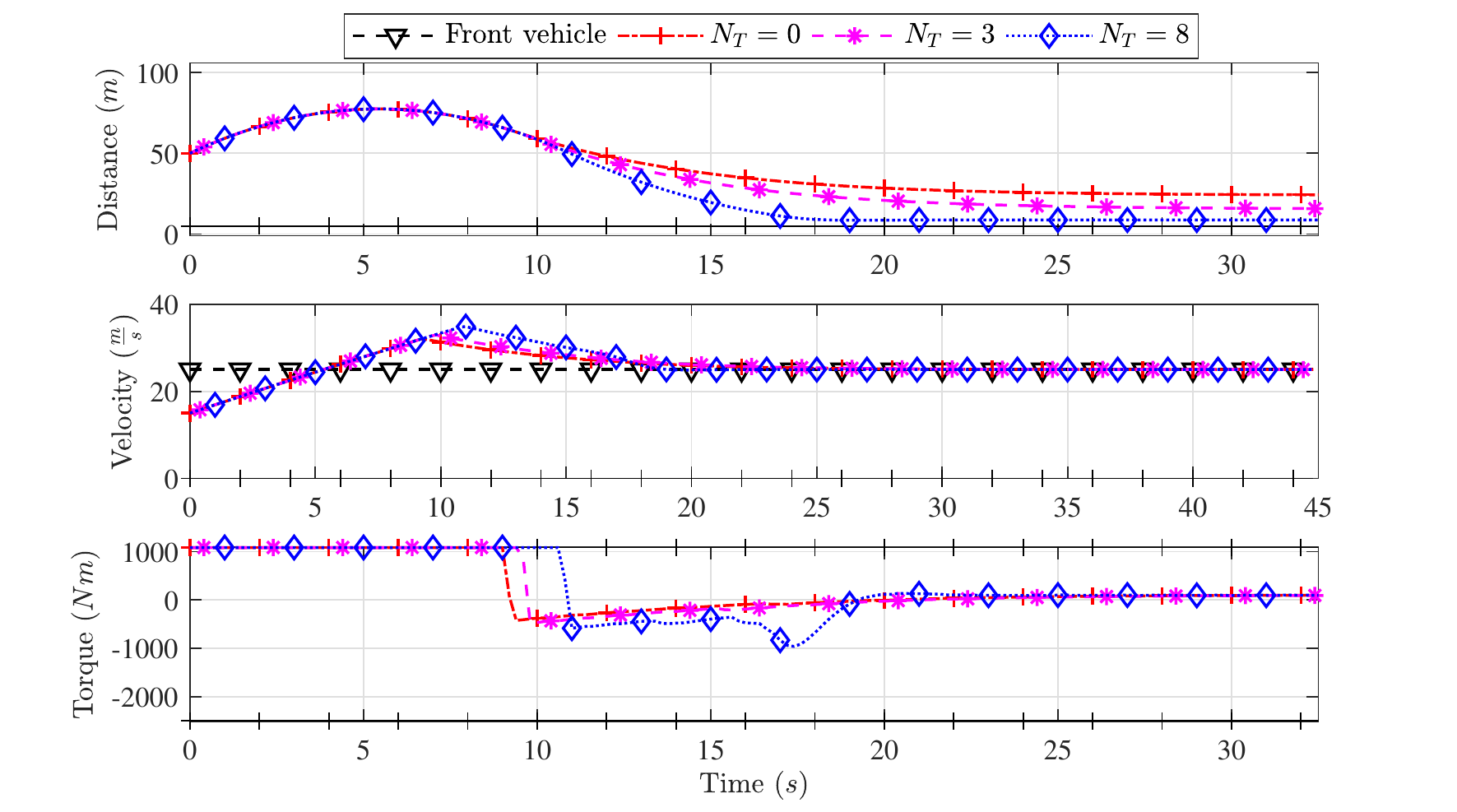}
\caption{Plots of the distance between the controlled and the front vehicles, their velocities, and the wheel torques of the controlled vehicle with different trust horizon $N_t$.}\label{fig:clplot:4-3}
\end{figure}

\subsection{Effects of the front vehicle's distance/velocity measurement noise magnitude} \label{sec:small_vel_uncer}
Communication improves the accuracy of the environment perception by allowing the direct exchange of states among vehicles. 
With only a radar, noise magnitudes can be significant \cite{floudas2005survey}.
In this section we study the scenario where the V2V message from the front vehicle offers measurement of the front vehicle velocity and location with lower measurement noises. 
To simulate the measurement noise, we generate uniformly distributed random numbers, of which magnitudes are bounded by $n_{\textnormal{max}}$, and add them to the distance and front vehicle velocity measurements in \eqref{eq:rob_ob_init}.

Figure~\ref{fig:clplot:4-4} shows the trajectories of the front and the controlled vehicles, in terms of their inter-vehicular distance, velocities, and wheel torque when the controlled vehicle measures the front vehicle distance with noise magnitudes bounded by $0 \si{m}$, $0.15\si{m}$, and $0.3\si{m}$ and the front vehicle velocity with noise magnitudes bounded by $0 \si{m \per \s}$, $0.15 \si{m \per \s}$, and $0.3\si{m \per \s}$.
One can observe, as expected, that the bigger noise bounds result in larger fluctuations in the wheel torque trajectories.
% This can result in both discomfort for drivers and energy waste \cite{turri2017model}.

\begin{figure}
\centering
\includegraphics[width=\columnwidth]{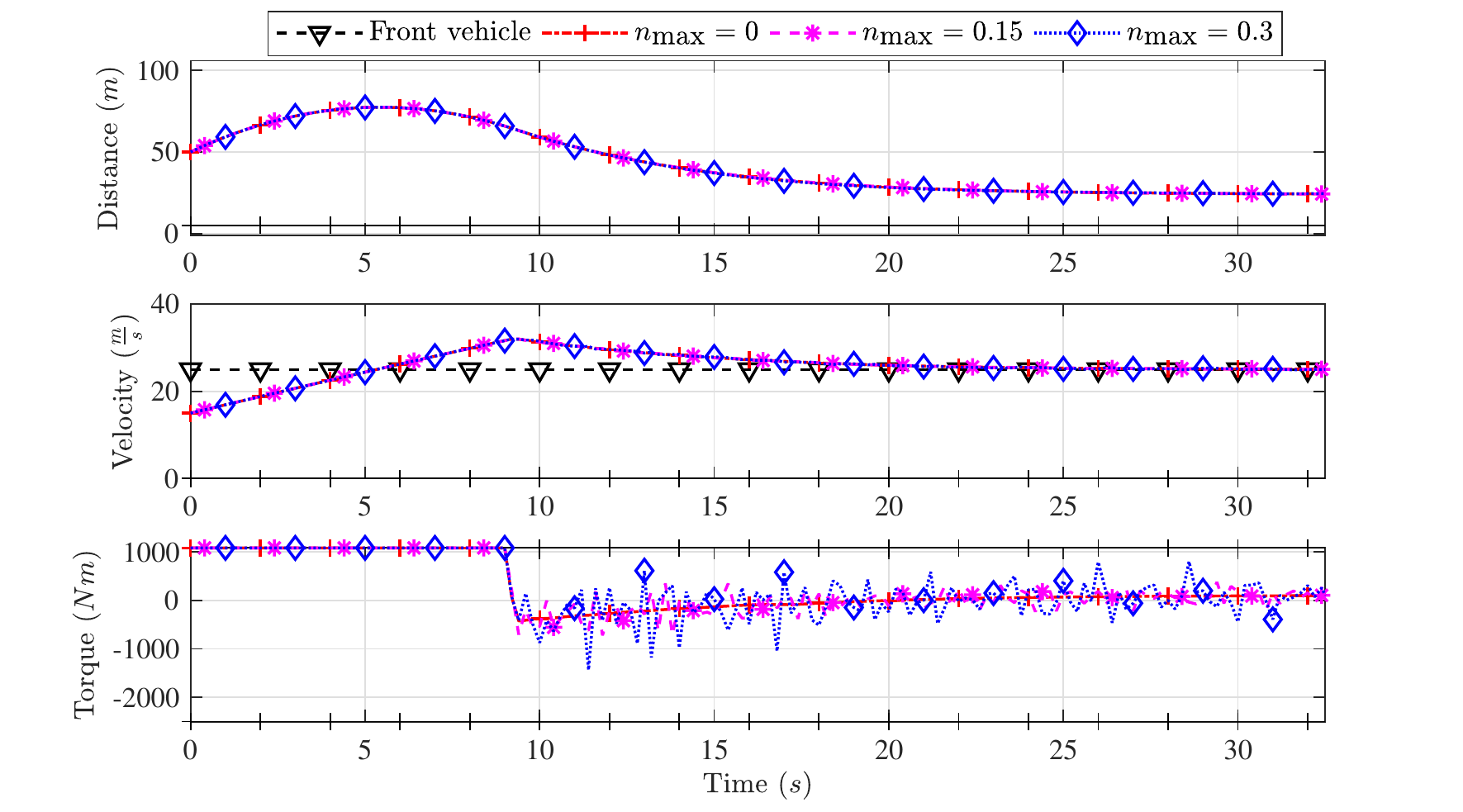}
\caption{Plots of the distance between the controlled and the front vehicles, their velocities, and the wheel torques of the controlled vehicle with different maximum magnitudes of the front vehicle velocity and distance $n_{\textnormal{max}}$.}\label{fig:clplot:4-4}
\end{figure}

\section{Conclusion}
\label{sec:conclusion}

In this paper, we first fit a well known longitudinal vehicle dynamics model with experimentally identified parameters for compact vehicles.
Then, we investigate the effects of smaller vehicular gap on both air drag and energy consumption.
Motivated by the experimental results on the effects of a vehicular gap on energy saving, we propose a car-following cooperative adaptive cruise controller which seeks to minimize a vehicular gap exploiting the V2V communication.
We show, under some mild assumptions, that the proposed control system can robustly guarantee the safety of the controlled vehicle.
Then, simulation results are presented to show the effects of V2V connectivity with various message contents.
Our future work will be articulated in three directions. 
First, the presented control design only considers calculating control action with V2V communication with a-priori known contents. 
Our next work will study interaction between control and V2V information which needs to be transmitted concurrently.
Second, string stability analysis given different V2V information will be conducted by simulation. 
Third, experimental validation of our proposed controller in closed loop may be carried out. 
In the experiments, the lateral dynamics can be taken into account in the control design as the lateral distance gap can also play important role in energy saving effects of the compact platoon \cite{mcauliffe2018wind}.

\section*{Acknowledgement}

The information, data, or work presented herein was funded in part by the Advanced Research Projects Agency-Energy (ARPA-E), U.S. Department of Energy, under Award Number DE-AR0000791. The views and opinions of authors expressed herein do not necessarily state or reflect those of the United States Government or any agency thereof.

\appendix[Safety Guarantee of our Control System]\label{sec:safety_guarantee}
In the conflated control design such as the reinforcement learning method \cite{kreidieh2018dissipating}, the safety guarantee (to satisfy system constraints at all times) is very difficult to ensure. 
In robust MPC, the safety guarantee can be systemically achieved by enforcing the state at the terminal horizon to be in the robust control invariant set (see \cite{mayne2000constrained} for details).

To compute a robust control invariant set for system \eqref{eq:long_dyn} subject to constraints \eqref{eq:stateinputconstraints}, we build on the method from \cite{lefevre2016learning}, which is suited to linear hybrid systems.

First we introduce the uncertain linear hybrid system
\begin{equation}
\label{eq:lin_long_dyn}
x^l(k+1)= 
\begin{bmatrix}
d^l(k+1)\\
v^l(k+1)\\
v^{f,l}(k+1)
\end{bmatrix} =
\begin{bmatrix}
d^l(k) + t_s(v^{f,l}(k) - v^l(k)) \\
v^l(k) + \frac{t_s}{m} \Big(\frac{u^l(k)}{R_w} \Big) \\
\big( v^{f,l}(k) + t_s a^{f,l}(t)\big)^+
\end{bmatrix}
\end{equation}
and the state/input constraints and the uncertainty set are compactly written as
\begin{equation}\label{eq:lin_stateinputconstraints}
x^l(k) \in \mathbb{X}, \, u^l(k) \in \mathbb{U}^l,\, a^{f,l}(k) \in \mathbb{A}^f, 
\end{equation}
where $\mathbb{X}$ and $\mathbb{A}^f$ are defined in \eqref{eq:stateconstraints} and \eqref{eq:a_front_set}, respectively, and
\begin{equation} \label{eq:Ul}
\begin{aligned}
\mathbb{U}^l:= \{u^l(k):  &T^w_{\textnormal{min}} - R_w mg \munderbar{C}_r \leq u^l(k), \nonumber \\
& u^l(k) \leq T^w_{\textnormal{max}} - R_w (mg\bar{C}_r +\rho A \bar{C}_x v_{\textnormal{max}}^2) \}.
\end{aligned}
\end{equation}
In \eqref{eq:Ul}, $\munderbar{C}_r$ and $\bar{C}_r$ are the minimum and the maximum values of $( \sin(\vartheta) + C_r\cos(\vartheta))$ and $\bar{C}_x$ is the maximum values of ${C}_x(\cdot)$. 
% Note that the system dynamics \eqref{eq:lin_long_dyn} is the same as the nonlinear dynamics \eqref{eq:cont_oriented_long_dyn} without the terms $mg{C}_r(t)$ and $\rho A C_x(t) v^2(k| t)$.

Using the method from \cite{lefevre2016learning}, we can obtain a polytopic representation of a robust control invariant set for the system \eqref{eq:lin_long_dyn} with the constraints \eqref{eq:lin_stateinputconstraints}. More specifically, we computed the robust control invariant sets for the system \eqref{eq:lin_long_dyn} with the constraints \eqref{eq:lin_stateinputconstraints} assuming different initial front vehicle velocities discretized from the complete stop to the maximum vehicle velocity.
The collection of these invariant sets then represents the robust control invariant set for our system. It is also noted that to compute each invariant set, we used the Multi-Parametric Toolbox (MPT) to compute the backward reachability set of the linear system until the set converges \cite{kvasnica2004multi}. 

Then, we are in place to state the following theorem.
\begin{theorem}
\label{inv_set}
A set $\mathbb{C}$ is a robust control invariant set for the system \eqref{eq:cont_oriented_long_dyn}-\eqref{eq:a_front_set} if $\mathbb{C}$ is also a robust control invariant set for the linear system \eqref{eq:lin_long_dyn}-\eqref{eq:lin_stateinputconstraints}.
\end{theorem}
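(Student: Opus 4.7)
The plan is to exploit the structural similarity between the nonlinear system \eqref{eq:cont_oriented_long_dyn} and the auxiliary linear hybrid system \eqref{eq:lin_long_dyn}: the distance and front-vehicle-velocity dynamics are identical in the two systems, so the only non-trivial coupling is through the controlled-vehicle velocity equation \eqref{eq:ego_veh_dyn} versus its linear counterpart. The whole argument reduces to showing that, at any state $x\in\mathbb{C}$, the set of reachable next-state velocities is at least as rich in the nonlinear system as in the linear one.

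First, I would introduce an invertible change of control variables which absorbs the nonlinear drag and gravity/rolling terms into the input. Specifically, for any state $x=(d,v,v^f)$ with $0\le v\le v_{\textnormal{max}}$ and any road grade $\vartheta$, define the affine map
\begin{equation*}
T^w \;=\; u^l \;+\; R_w\Bigl(mg\bigl(\sin\vartheta+C_r\cos\vartheta\bigr) + \tfrac{1}{2}\rho A\, C_x(d)\, v^2\Bigr).
\end{equation*}
With this choice, the nonlinear velocity update \eqref{eq:ego_veh_dyn} produces the same pre-projection value as the linear update \eqref{eq:lin_long_dyn}, and the distance and $v^f$ updates already agree. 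Thus $f_t(x,T^w,a^f)=f^l(x,u^l,a^f)$ whenever the pre-projected $v$ is non-negative, which is automatic since any $u^l$ witnessing invariance satisfies $f^l(x,u^l,a^f)\in\mathbb{C}\subseteq\mathbb{X}$ and hence the resulting $v\ge 0$, making the $(\cdot)^+$ operator inactive.

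Next I would verify that the map $u^l\mapsto T^w$ sends $\mathbb{U}^l$ into $\mathbb{U}$. This is exactly the purpose of the input-set definition \eqref{eq:Ul}: the lower offset $R_w mg\,\munderbar{C}_r$ is the minimum possible value of the gravity/rolling correction, and the upper offset $R_w\bigl(mg\bar{C}_r+\rho A\bar{C}_x v_{\textnormal{max}}^2\bigr)$ dominates the sum of the gravity/rolling and drag corrections over all admissible $\vartheta$, $d$, and $v\in[0,v_{\textnormal{max}}]$. Since $\mathbb{C}\subseteq\mathbb{X}$ and any slope satisfies $\sin\vartheta+C_r\cos\vartheta\in[\munderbar{C}_r,\bar{C}_r]$, the resulting $T^w$ lies in $\mathbb{U}$ for every $u^l\in\mathbb{U}^l$.

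Finally I would close the argument by invoking the hypothesis. Fix any $x\in\mathbb{C}$ and any $a^f\in\mathbb{A}^f$. By robust control invariance of $\mathbb{C}$ for the linear system, there exists $u^l\in\mathbb{U}^l$ with $f^l(x,u^l,a^f)\in\mathbb{C}$. Applying the affine reparameterisation above yields a $T^w\in\mathbb{U}$ such that $f_t(x,T^w,a^f)=f^l(x,u^l,a^f)\in\mathbb{C}$, which is precisely the defining property of robust control invariance for the nonlinear system. The main subtlety I anticipate is bookkeeping around the $(\cdot)^+$ projection and the worst-case bounds $\munderbar{C}_r,\bar{C}_r,\bar{C}_x$: one has to be careful that the linear over-approximation really is conservative on \emph{both} sides (lower bound from gravity pulling the vehicle, upper bound from drag resisting it), so that the reparameterisation always lands in $\mathbb{U}$ regardless of which state $x\in\mathbb{C}$ and which slope are realised.
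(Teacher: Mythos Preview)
Your proposal is correct and follows essentially the same route as the paper: a feedback-linearising reparameterisation $T^w = u^l + R_w\bigl(mg(\sin\vartheta+C_r\cos\vartheta)+\tfrac{1}{2}\rho A C_x(d)v^2\bigr)$ that cancels the nonlinear terms in \eqref{eq:ego_veh_dyn}, followed by the observation that $\mathbb{U}^l$ was defined precisely so that this map lands in $\mathbb{U}$. Your treatment is in fact more careful than the paper's own proof, which omits the explicit check that the $(\cdot)^+$ projection is inactive and only asserts (rather than verifies via the $\munderbar{C}_r,\bar{C}_r,\bar{C}_x$ bounds) that the resulting input satisfies \eqref{eq:inputconstraints}.
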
 
\begin{proof}
Proof for Theorem \ref{inv_set} is straightforward from the definition of a robust control invariant set \cite{blanchini1999set}.
A robust control invariant set $\mathbb{C}$ for the system \eqref{eq:lin_long_dyn} with constraints \eqref{eq:lin_stateinputconstraints} means that if $x^l_0 \in \mathbb{C}$, there exists a control law $u^l(t) = \Phi(x^l(t)) \in \mathbb{U}^l$ such that $x^l(t) \in \mathbb{C}$, for all $a^{f,l}(t) \in \mathbb{A}^f$.

Now, observe that the control policy $u(t) = \Phi(x(t)) + R_w mg ( \sin(\vartheta) + C_r\cos(\vartheta)) + R_w\rho A {C}_x(t) v(t)^2$ makes the set $\mathbb{C}$ also a robust control invariant set for our original system \eqref{eq:cont_oriented_long_dyn}-\eqref{eq:a_front_set} as it cancels out the nonlinear terms in \eqref{eq:cont_oriented_long_dyn}, $mg(\sin{\vartheta(t)} + {C}_r\cos{\vartheta(t)}) - \frac{1}{2} \rho A C_x(d(k |t))  v(k |t)^2 $.
Moreover, it is easy to see that the control policy $u(t)$ always satisfies the input constraints \eqref{eq:inputconstraints}.
\end{proof}

It is also noted that the defined robust control invariant set $\mathbb{C}$ can be applied to the control designs other than the robust MPC. For instance, in reinforcement learning control design (as suggested in \cite{kreidieh2018dissipating}), a risk-sensitive reinforcement learning against the probability of the state to be outside $\mathbb{C}$ can be used to train the safe CACC control policy.

\begin{remark}
\label{rem2}
The robust control invariant set $\mathbb{C}$ is unbounded both in the positive distance direction and the positive front vehicle velocity direction.
% The proof is also straightforward from the invariant set definition and is omitted in this paper.
\end{remark}

% \begin{remark}
% \label{rem3}
% With the robust invariant set $\mathbb{C}$ and the observer design \eqref{eq:rob_state_pred}, the RMPC problem \eqref{eq:mpcproblem}-\eqref{eq:optimal_input} is persistently feasible and the closed loop system never violates the constraints when there are no state measurement errors and road slope effects.
% \end{remark}

Now we are in place to state the following result.
\begin{theorem}
\label{recur_theo}
Consider the RMPC problem \eqref{eq:mpcproblem}-\eqref{eq:optimal_input}  ($N_p\geq1$) for the system \eqref{eq:cont_oriented_long_dyn}-\eqref{eq:a_front_set}. Assume that: (i) for $t \leq 0$, the RMPC problem is feasible, (ii) the closed loop state feedback is provided by the control oriented model \eqref{eq:cont_oriented_long_dyn} on a flat road ($\vartheta(t)=0 \, \forall t$), and (iii) state measurements are accurate ( $n_{d}^{\textnormal{max}} = n_{v^f}^{\textnormal{max}} = 0$) but the front vehicle distance and velocity are subject to delay. Then, the RMPC problem \eqref{eq:mpcproblem}-\eqref{eq:optimal_input} is persistently feasible and the closed loop system never violates the constraints \eqref{eq:stateinputconstraints}.
\end{theorem}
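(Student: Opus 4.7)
The plan is to argue by induction on $t$, via the standard robust MPC shift-and-append construction: given a feasible solution at time $t$, I would build a candidate feasible solution at time $t+1$ by shifting the optimal input sequence forward by one step and appending a new terminal input supplied by the robust control invariance of $\mathbb{C}$ from Theorem~\ref{inv_set}. The key mechanism linking the MPC's worst-case internal prediction to the actual closed-loop realization is the sufficient-condition remark made between \eqref{eq:front_a_def} and \eqref{eq:optimal_sol}: obeying \eqref{eq:stateinputconstraints} along the trajectory generated with the conservative front-vehicle acceleration \eqref{eq:front_a_def} implies obeying them for every $a^f \in \mathbb{A}^f$, a property rooted in the monotonicity of the distance and front-vehicle velocity dynamics with respect to $a^f$.

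For the base case, assumption (i) gives feasibility at $t \leq 0$, with optimal $u^{*}(\cdot|t)$ and $x^{*}(\cdot|t)$ satisfying \eqref{eq:system_cons} and $x^{*}(t+N_p|t)\in\mathbb{C}$. For the inductive step, at time $t+1$ I would set $u^{c}(k|t+1) = u^{*}(k|t)$ for $k = t+1,\ldots,t+N_p-1$ and pick $u^{c}(t+N_p|t+1)$ from the robust control invariant feedback associated with $\mathbb{C}$, whose existence is guaranteed by Theorem~\ref{inv_set}. The first $N_p-1$ candidate predicted states inherit $x^{*}(k+1|t)\in\mathbb{X}$, and the appended terminal state remains in $\mathbb{C}\subseteq\mathbb{X}$ for every $a^f(t+N_p|t+1)\in\mathbb{A}^f$ by the defining property of $\mathbb{C}$. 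Under assumption (ii) ($\vartheta\equiv 0$) and assumption (iii) (no measurement noise, only delay), the estimator \eqref{eq:rob_state_pred} is an exact rollout of the controlled-vehicle state using the applied inputs and the conservative prediction $\tilde a^f$, so $\tilde x(t+1|t+1)$ matches the MPC's one-step prediction $x^{*}(t+1|t)$ in the controlled-vehicle coordinates and is dominated by $x^{*}(t+1|t)$ in the safety-relevant directions for every realization of $\bar a^f(t)\in\mathbb{A}^f$. This makes the candidate a feasible solution at time $t+1$, so induction closes; closed-loop constraint satisfaction then follows because $u^{*}(t|t)\in\mathbb{U}$ and the resulting $\bar x(t+1)$ obeys \eqref{eq:stateinputconstraints} by the same sufficient-condition argument.

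The main obstacle lies in this last observation: making precise the identification between the delay-compensating estimator $\tilde x(t+1|t+1)$ and the MPC's own prediction $x^{*}(t+1|t)$, since the estimator reinitializes from a measurement acquired $h$ steps in the past and marches forward using the conservative $\tilde a^f$ and the applied input history, while $\bar a^f$ may have taken any admissible value in the meantime. The required inequality takes the form ``conservative predicted distance $\leq$ actual distance and conservative predicted front-vehicle velocity $\leq$ actual front-vehicle velocity, for every admissible $\bar a^f$'', and follows from the monotonicity of \eqref{eq:cont_oriented_long_dyn} in $a^f$ together with the $(\cdot)^+$ projection in \eqref{eq:front_veh_dyn}. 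Once this monotonicity is established, the rest of the argument is routine robust MPC bookkeeping, and it extends naturally to any trust horizon $N_T \geq 0$ since the predictor \eqref{eq:front_a_def} remains conservative beyond step $N_T$.
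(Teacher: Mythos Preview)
Your proposal is correct and follows essentially the same approach as the paper. The paper's own argument is a two-line sketch: it first appeals to standard robust MPC persistent feasibility (robust control invariant terminal set $\mathbb{C}$ plus the conservative acceleration predictor \eqref{eq:front_a_def}) for the undelayed case $h=0$, then asserts that the same feasible input trajectory remains feasible when the initial condition is replaced by the delayed-but-shifted estimate \eqref{eq:rob_state_pred}. Your shift-and-append induction is exactly the mechanism underlying the cited standard result, and your monotonicity-in-$a^f$ observation is precisely the content of the sufficient-condition remark the paper invokes; in fact your treatment of the delay case---explicitly comparing $\tilde x(t+1|t+1)$ to $x^{*}(t+1|t)$ via monotone domination---supplies the detail that the paper leaves implicit.
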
 
The proof the Theorem~\ref{recur_theo} is backed by the property of the RMPC. First, it is trivial that when the initial state is measured without delay (i.e., $h=  0$ and $\Tilde{x}(t|t) =\bar{x}(t)$), the RMPC problem is persistently feasible due to the property of the robust control invariant set $\mathbb{C}$ \cite{mayne2000constrained, lefevre2016learning}, and the robust front vehicle acceleration forecast \eqref{eq:front_a_def}. Moreover, this guaranteed feasible input trajectory always satisfies every state, input, and terminal constraint in \eqref{eq:mpcproblem} even when the initial state, $\Tilde{x}(t|t)$, is the delayed yet shifted measurement as done in \eqref{eq:rob_state_pred}.

\bibliography{bibliography}

\end{document}